\newtheorem{lemma}{\hspace{0pt}\bf Lemma}
\newtheorem{theorem}{\hspace{0pt}\bf Theorem}
\newtheorem{definition}{\hspace{0pt}\bf Definition}
\title{Convolutional Filtering on Sampled Manifolds}
{\name{Zhiyang Wang$^{\dagger}$ \qquad Luana Ruiz${^\star}$ \qquad Alejandro Ribeiro$^{\dagger}$}
\address{  $^{\dagger}$Department of Electrical and Systems Engineering, University of Pennsylvania, PA \\
${^\star}$ Simons-Berkeley Institute, CA \\ 
\thanks{Supported by NSF CCF 1717120, Theorinet Simons. } }
}
\begin{document}
\ninept
\maketitle
%
\begin{abstract}
The increasing availability of geometric data has motivated the need for information processing over non-Euclidean domains modeled as manifolds. 
The building block for information processing architectures with desirable theoretical properties such as invariance and stability is convolutional filtering. 
Manifold convolutional filters are defined from the manifold diffusion sequence, constructed by successive applications of the Laplace-Beltrami operator to manifold signals. However, the continuous manifold model can only be accessed by sampling discrete points and building an approximate graph model from the sampled manifold. Effective linear information processing on the manifold requires quantifying the error incurred when approximating manifold convolutions with graph convolutions. In this paper, we derive a non-asymptotic error bound for this approximation, showing that convolutional filtering on the sampled manifold converges to continuous manifold filtering. Our findings are further demonstrated empirically on a problem of navigation control. 
\end{abstract}
\begin{keywords}
Manifold filter, manifold signal processing, Laplace-Beltrami operator, graph Laplacian
\end{keywords}
%


\section{Introduction} \label{sec:intro}

Applications involving structured data in non-Euclidean domains are attracting more and more attention these days, including but not limited to robot coordination \cite{li2020graph, wang2020learning}, resource allocation over wireless networks \cite{wang2022stable} and 3D shape analysis \cite{zeng20193d, devanne20143}. Graph convolutional filtering \cite{ortega2018graph, gama2019convolutional} and manifold convolutional filtering \cite{wang2022convolutional, masci2015geodesic} have become the top choices to process information over these non-Euclidean domains. This can be attributed to the fact that the convolution operation consists of local diffusions over the geometric structure, which capture features that are shared across the whole domain. Moreover, convolutional filtering provides the fundamental block for constructing deep learning architectures, which are powerful models \cite{bronstein2017geometric} to solve a wide range of geometric problems. 

Unlike graphs, which are discrete objects, manifolds are not directly accessible in general because they are infinite-dimensional continuous spaces \cite{zeng20193d, wang2022convolutional}. Instead, we typically have access to a set of points sampled from the manifold, or a \emph{sampled manifold}. In many manifold problems, a \emph{graph} capturing both the local and global connection structure is built from this sampled manifold to approximate the underlying continuous manifold \cite{shi2020point, wang2022stable} 
(herein, we use the terms \emph{sampled manifold} and \emph{graph} interchangeably). Convolutional filters constructed on graphs have been used to approximately process information over the manifold, which has been supported both experimentally and theoretically \cite{bronstein2017geometric, monti2017geometric, wang2022convolutional}. However, an explicit non-asymptotic theoretical result closing the gap between graph convolutional filtering and manifold convolutional filtering is still lacking.

\noindent\textbf{Contributions.} In this paper, we define manifold filtering as a convolutional operation based on diffusions that are defined as exponentials of the Laplace-Beltrami (LB) operator $\ccalL$ of the embedded manifold $\ccalM \subset \reals^\mathsf{N}$ \cite{wang2022convolutional}. Graphs are constructed by sampling a set of points uniformly over the manifold, with edge weights given by application of a Gaussian kernel. Denoting the graph Laplacian $\bbL_n^\epsilon$, we derive upper bounds for both the operator norm difference and the difference between the spectra of the graph Laplacian $\bbL_n^\epsilon$ and the LB operator $\ccalL$. Then, to deal with the infinite dimensional spectrum and the sharp variations in the high frequency domain of LB operator $\ccalL$, we import the definition of the Frequency Difference Threshold (FDT) filters (Definition \ref{def:alpha-filter}) \cite{wang2021stability} to separate the spectrum into finite eigenvalue partitions. Finally, we show that when applying FDT filters to graphs sampled from the manifold, graph filtering converges to the manifold filtering with a non-asymptotic approximation bound in the order of $n^{-1/(2d+8)}$ (Theorem \ref{thm:converge-MNN}).

\noindent\textbf{Related Works.} Graph convolutional filtering has been thoroughly discussed in previous works \cite{ortega2018graph, shuman2013emerging, gama2019convolutional}. A large number of works have studied the convergence of graph filters, such as \cite{ruiz2020graphon} and \cite{keriven2020convergence} which consider graphs that are dense or relatively sparse, and which are both sampled from a graphon model. Unlike the graphon, the manifold is more realistic and can be seen as a limit for graphs that have finite degree \cite{calder2019improved}. Convolutional filtering in manifolds has been discussed in \cite{wang2022convolutional, masci2015geodesic,monti2017geometric}, with different definitions depending on how the graphs are sampled from manifolds. The connection between graph filtering and manifold filtering has been discussed in \cite{wang2022convolutional}. In this paper, it is proved that manifold filtering can recover graph filtering by discretization in both the space and time domains; and that filtering on graphs sampled from the manifold converges to manifold filtering. However, there are no theoretical convergence rates. In the present paper, we extend upon the results of \cite{wang2022convolutional} by proving a non-asymptotic approximation bound for graph filters sampled from manifold filters.





\section{Preliminary definitions} \label{sec:prelim}

In order to relate graph filters and manifold filters, we start by reviewing the basic setup as well as the graph signal processing and manifold signal processing frameworks.

\subsection{Graph signal processing: graph filters}
We consider an undirected graph $\bbG = (\ccalV, \ccalE, \ccalW)$ with $\ccalV, | \ccalV | = n$, representing the set of nodes and $\ccalE \subseteq \ccalV\times \ccalV$ the set of edges. The weight function $\ccalW: \ccalE \rightarrow \reals$ assigns weights to the edges. Graph signals $\bbx \in \reals^n$ are defined as data supported on the graph, with the $i$-th element $x_i$ representing the datum at node $i$ \cite{ortega2018graph,shuman2013emerging}. 

The so-called graph shift operator (GSO) $\bbS\in \reals^{n\times n}$ is typically selected to be the Laplacian matrix, the normalized Laplacian or the adjacency matrix, which encodes the graph sparsity pattern. The application of the GSO to a graph signal leads to a diffusion of the data over the edges. Specifically, for each node $i$, the GSO diffuses the signals on the neighboring nodes $x_j$ scaled with the proximity measure $[\bbS]_{ij}$ to this node $i$. I.e. $[\bbS\bbx]_i=\sum_{j,(i,j)\in\ccalE}[\bbS]_{ij}x_j$. Based on this notion of diffusion, we can then define graph convolutional filters as a shift-and-sum operation
\begin{equation}
    \label{eqn:graph_convolution}
\bbh_\bbG(\bbS) \bbx = \sum_{k=0}^{K-1} h_k \bbS^k \bbx,
\end{equation}
where $\{h_k\}_{k=0}^{K-1}$ are the filter coefficients \cite{gama2019convolutional,ortega2018graph}. 

For an undirected graph $\bbG$, the GSO is symmetric and thus can be diagonalized as $\bbS= \bbV \bm{\Lambda} \bbV^H$, with $\bm{\Lambda}$ containing the graph eigenvalues and $\bbV$ the graph eigenvectors. By replacing the GSO by its eigendecomposition, we can thus write the spectral representation of the graph convolution as 
\begin{equation}
    \label{eqn:graph_convolution_spectral}
    \bbV^H \bbh_\bbG(\bbS) \bbx =  \sum_{k=1}^{K-1} h_k \bm\Lambda^k \bbV^H \bbx = h(\bm\Lambda)\bbV^H \bbx.
\end{equation}
The frequency response of graph convolution can therefore be written as $h(\lambda)= \sum_{k=0}^{K-1} h_k \lambda^k$, which only depends on the coefficients $h_k$  and the eigenvalues of $\bbS$.


\subsection{Manifold signal processing: manifold filters}
We consider a compact smooth differentiable $d$-dimensional submanifold $\ccalM$ embedded in some Euclidean space $\reals^\mathsf{N}$, and a measure $\mu$ over the manifold $\ccalM$. Manifold signals are defined as scalar functions $f:\ccalM\rightarrow \reals$ supported on $\ccalM$. The inner product of functions $f, g \in L^2(\ccalM)$ can be defined as 
\begin{equation}\label{eqn:innerproduct}
    \langle f,g \rangle_{L^2(\ccalM)}=\int_\ccalM f(x)g(x) \text{d}\mu(x) 
\end{equation}
with $\text{d}\mu(x)$ representing the volume element with respect to measure $\mu$.  Therefore, the energy of manifold signal $f$ is naturally given as 
$\|f\|^2_{L^2(\ccalM)}={\langle f,f \rangle_{L^2(\ccalM)}}$. We consider manifold signals with finite energy, i.e., $f \in L^2(\ccalM)$. 

The manifold is locally Euclidean and the local homeomorphic Euclidean space around each point $x\in\ccalM$ is defined as the tangent space $T_x\ccalM$. The disjoint union of all tangent spaces over $\ccalM$ is denoted $T\ccalM$. The \emph{intrinsic gradient} $\nabla: L^2(\ccalM)\rightarrow L^2(T\ccalM)$ is the differentiation operator and maps scalar functions to tangent vector functions over $\ccalM$ \cite{bronstein2017geometric}. The adjoint of $\nabla$ is the \emph{intrinsic divergence}, which is defined as $\text{div}: L^2(T\ccalM)\rightarrow L^2(\ccalM)$. The Laplace-Beltrami (LB) operator $\ccalL: L^2(\ccalM) \to L^2(\ccalM)$ can be defined as the intrinsic divergence of the intrinsic gradient \cite{rosenberg1997laplacian}. More formally,
\begin{equation}\label{eqn:Laplacian}
    \ccalL f=-\text{div}\circ \nabla f.
\end{equation}
The LB operator measures the difference between the function value at a point and the average function value over the point's neighborhood \cite{bronstein2017geometric}, similar to graph Laplacian matrix.

The manifold convolution of $\tdh:\reals^+ \to \reals$ and the manifold signal $f \in L^2(\ccalM)$ is called a manifold filter with impulse response  $\tdh$, and is defined by leveraging the heat diffusion dynamics \cite{wang2022convolutional}. Explicitly, this filter, denoted $\bbh$, is given by
\begin{align} \label{eqn:convolution-conti}
   g(x) = (\bbh f)(x) := \int_0^\infty \tdh(t)e^{-t\ccalL}f(x)\text{d}t =  \bbh(\ccalL)f(x) \text{.}
\end{align}
This is a spatial manifold convolution operator because it is parametrized by the operator $\ccalL$ and directly operates on points $x\in \ccalM$. 

The LB operator $\ccalL$ is self-adjoint and  positive-semidefinite. These facts, together with the compactness of $\ccalM$, imply that $\ccalL$ has a real, positive and discrete eigenvalue spectrum $\{\lambda_i\}_{i=1}^\infty$ such that $\ccalL \bm\phi_i =\lambda_i \bm\phi_i$, where $\bm\phi_i$ is the eigenfunction associated with eigenvalue $\lambda_i$. The eigenvalues are ordered in increasing order as $0<\lambda_1\leq \lambda_2\leq \lambda_3\leq \hdots$. The eigenfunctions are orthonormal and thus form an eigenbasis of $L^2(\ccalM)$ in the intrinsic sense.

Denoting $[\hat{f}]_i=\langle f, \bm\phi_i\rangle_{L^2(\ccalM)}=\int_{\ccalM} f(x)\bm\phi_i(x) \text{d}\mu(x)$ and $e^{-t\ccalL}\bm\phi_i = e^{-t\lambda_i}\bm\phi_i$, the manifold convolution can be represented in the manifold frequency domain as
\begin{align}
    [\hat{g}]_i = \int_0^\infty \tdh(t) e^{-t\lambda_i}  \text{d} t [\hat{f}]_i\text{.}
\end{align}
Hence, the frequency response of the filter $\bbh(\ccalL)$ is given by $\hat{h}(\lambda)=\int_0^\infty \tdh(t) e^{- t \lambda  }\text{d}t$, which depends only on the function $\tilde{h}$ and on the eigenvalues of the LB operator. 
Summing over all $i$, we thus get
\begin{equation}\label{eqn:spectral-filter}
    g =\bbh(\ccalL)f= \sum_{i=1}^{\infty} \hat{h}(\lambda_i)[\hat{f}]_i \bm\phi_i.
\end{equation}



\section{Convergence of filtering on sampled manifolds} \label{sec:convergence}

Our goal is to derive a non-asymptotic approximation bound for manifold filtering on sampled manifolds, providing theoretical guarantees when the sampled manifold is modeled as a graph and showing that graph filtering can be used to approximate manifold filtering. We start by constructing the graph corresponding to the sampled manifold.

\subsection{Sampled manifolds as graphs}
Let $X=\{x_1, x_2, \hdots x_n\}$ be a set of $n$ points sampled uniformly and independently at random from the $d$-dimensional manifold $\ccalM$ according to measure $\mu$. The empirical measure associated with $\text{d}\mu$ is $p_n=\frac{1}{n}\sum_{i=1}^n \delta_{x_i}$, where $\delta_{x_i}$ is the Dirac measure supported on $x_i$. Similar to the definition of inner product in $L^2(\ccalM)$ \eqref{eqn:innerproduct}, the inner product in $L^2(\bbG_n)$ is defined as
 \begin{equation}
     \langle u, v\rangle_{L^2(\bbG_n)}=\int u(x)v(x)\text{d}p_n=\frac{1}{n}\sum_{i=1}^n u(x_i)v(x_i).
 \end{equation}
 The norm in $L^2(\bbG_n)$ with measure $p_n$ is therefore $$\|u\|^2_{L^2(\bbG_n)} = \langle u, u \rangle_{L^2(\bbG_n)}$$ for $u,v \in L^2(\ccalM)$. For signals $\bbu,\bbv \in L^2(\bbG_n)$, the inner product is $$\langle \bbu,\bbv \rangle_{L^2(\bbG_n)} = \frac{1}{n}\sum_{i=1}^n [\bbu]_i[\bbv]_i.$$
 
 From $X$, we construct an undirected graph $\bbG_n$ by seeing the sampled points as nodes. The weights of the edges connecting nodes $x_i$ and $x_j$ are determined by
\begin{equation}\label{eqn:weight}
w_{ij}=\frac{1}{n}\frac{1}{\epsilon(4\pi \epsilon)^{d/2}}\exp\left(-\frac{\|x_i-x_j\|^2}{4\epsilon}\right),
\end{equation}
where $\|x_i-x_j\|$ stands for the Euclidean distance between points $x_i$ and $x_j$ and $\epsilon$ is a parameter related to the Gaussian kernel \cite{belkin2008towards}. The adjacency matrix $\bbA_n \in \reals^{n\times n}$ can be written as $[\bbA_n]_{ij}=w_{ij}$ for $1 \leq i,j\leq n$. Thus, the graph Laplacian associated with this  graph can be written as $$\bbL^\epsilon_n = \mbox{diag}(\bbA_n \boldsymbol{1})-\bbA_n$$
with $[\bbL^\epsilon_n]_{ij} = - w_{ij}$ if $i \neq j$ and $[\bbL^\epsilon_n]_{ii} = \sum_{k=1}^n w_{ik}$ \cite{merris1995survey}. 

To discretize manifold signals $f\in L^2(\ccalM)$ on the sampled manifold, we define a uniform sampling operator $\bbP_n: L^2(\ccalM)\rightarrow L^2(\bbG_n)$ yielding signals $\bbf\in L^2(\bbG_n)$ defined explicitly as
\begin{equation}
\label{eqn:sampling}
    \bbf = \bbP_n f\text{ with }\bbf(x_i) = f(x_i), \quad  x_i \in X.
\end{equation}
This indicates that $\bbf$ shares the function values of the manifold signal $f$ at the sample points $X$.
Seeing the graph Laplacian as an operator acting on $\bbf: X\to \reals$, we can write the diffusion operation on each point $x$ explicitly as
\begin{equation}
\label{eqn:graph_laplacian}
    \bbL^\epsilon_n \bbf(x_i) = \sum_{j=1}^n w_{ij}\left( \bbf(x_i) -\bbf(x_j) \right) , \quad  i = 1,2,\hdots n.
\end{equation}
It can be seen that this formulation can extend to a continuous manifold signal $f$, which we denote as 
\begin{equation}
\label{eqn:discrete_laplacian}
    \bbL^\epsilon_n f(x) = \frac{1}{n}\frac{1}{\epsilon(4\pi \epsilon)^{d/2}} \sum_{j=1}^n \left( f(x) -f(x_j) \right) e^{-\frac{\|x-x_j\|^2}{4\epsilon}}, \; x\in \ccalM.
\end{equation}
If we further extend the set of sample points from $X$ to all the points on the manifold, we finally get the functional approximation of the graph Laplacian,
\begin{equation}
\label{eqn:functional_laplacian}
    \bbL^\epsilon f(x) = \frac{1}{n}\frac{1}{\epsilon(4\pi \epsilon)^{d/2}} \int_\ccalM \left( f(x) -f(y) \right) e^{-\frac{\|x-y \|^2}{4\epsilon}} \text{d}\mu(y),
\end{equation}
which operates on all $x\in\ccalM$.

\subsection{Graph Laplacian approximation of the LB operator}
We first demonstrate the consistency of the discrete graph Laplacians defined in \eqref{eqn:discrete_laplacian} when operating on the eigenfunctions $\bm\phi_i\in L^2(\ccalM)$ of the LB operator $\ccalL$. The quality of the approximation given by the discrete Laplace operator on the sampled manifold is quantified by the following non-asymptotic result.
\begin{theorem}\label{thm:operator-diff}
Let $\ccalM\subset \reals^\mathsf{N}$ be a compact smooth differentiable $d$-dimensional manifold with LB operator $\ccalL$, whose spectrum is given by $\{\lambda_i,\bm\phi_i\}_{i=1}^\infty$, and assume $\bm\phi_i \in C(\ccalM)$. Let $\bbG_n$ be the discrete graph connecting points $\{x_1, x_2,\hdots,x_n\}$ sampled uniformly and independently at random from $\ccalM$, with edge weights as in \eqref{eqn:weight} and $\epsilon=\epsilon(n)> n^{-1/(d+4)}$. Then, it holds 
\begin{equation}
    | \bbL_n^\epsilon \bm\phi_i(x)- \ccalL \bm\phi_i(x)|\leq \left(C_1 \sqrt{ \frac{ \ln(1/\delta)}{2n} }+C_2\sqrt{\epsilon}\right) \lambda_i^{\frac{d+2}{4}},
\end{equation}
with probability at least $1-\delta$. The constants $C_1$, $C_2$ depend on the volume of the manifold. 
\end{theorem}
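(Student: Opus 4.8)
The plan is to triangulate through the functional (population) Laplacian $\bbL^\epsilon$ of \eqref{eqn:functional_laplacian}, which is precisely the expectation of $\bbL_n^\epsilon\bm\phi_i(x)$ over the i.i.d.\ sample $X$, and write
\begin{equation}
|\bbL_n^\epsilon\bm\phi_i(x) - \ccalL\bm\phi_i(x)| \le \underbrace{|\bbL_n^\epsilon\bm\phi_i(x) - \bbL^\epsilon\bm\phi_i(x)|}_{\text{sampling error}} + \underbrace{|\bbL^\epsilon\bm\phi_i(x) - \ccalL\bm\phi_i(x)|}_{\text{bias error}}.
\end{equation}
The sampling error will supply the $\sqrt{\ln(1/\delta)/2n}$ term through a concentration inequality; the bias error will supply the $\sqrt\epsilon$ term through a Taylor expansion of the Gaussian kernel; and the common factor $\lambda_i^{(d+2)/4}$ appears because both pieces are linear in a $C^1$-type norm of $\bm\phi_i$, which I then bound by a power of $\lambda_i$.

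For the bias error I would work in geodesic normal coordinates centered at $x$ and use that, up to curvature corrections relating the extrinsic chord $\|x-y\|$ to the geodesic distance and up to the Riemannian volume density, the kernel $\frac{1}{\epsilon(4\pi\epsilon)^{d/2}}e^{-\|x-y\|^2/4\epsilon}$ acts as an approximate second-order mollifier: the zeroth-order Taylor term of $\bm\phi_i(y)$ cancels against the $\bm\phi_i(x)$ already present, the first-order term integrates to (almost) zero by symmetry of the Gaussian, the second-order term reproduces $\ccalL\bm\phi_i(x)$ up to a dimensional constant, and all remainders are $O(\sqrt\epsilon)$ times a derivative norm of $\bm\phi_i$. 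This is the Gaussian-kernel convergence estimate of Belkin and Niyogi \cite{belkin2008towards}, which I would invoke (re-derived in the present normalization) to get $|\bbL^\epsilon\bm\phi_i(x) - \ccalL\bm\phi_i(x)| \le C\sqrt\epsilon\,\|\nabla\bm\phi_i\|_{L^\infty(\ccalM)}$, with $C$ depending only on intrinsic geometric quantities of $\ccalM$ (curvature, injectivity radius), which the statement records as a dependence on $\mathrm{vol}(\ccalM)$.

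For the sampling error, fix the evaluation point $x$ and write $\bbL_n^\epsilon\bm\phi_i(x) = \frac1n\sum_{j=1}^n Y_j$ with $Y_j := \frac{1}{\epsilon(4\pi\epsilon)^{d/2}}\big(\bm\phi_i(x)-\bm\phi_i(x_j)\big)e^{-\|x-x_j\|^2/4\epsilon}$ i.i.d.\ and $\mathbb{E}[Y_j] = \bbL^\epsilon\bm\phi_i(x)$. Combining $|\bm\phi_i(x)-\bm\phi_i(x_j)| \le \|\nabla\bm\phi_i\|_{L^\infty}\,\mathrm{dist}(x,x_j)$ with $\sup_{r\ge 0} r\,e^{-r^2/4\epsilon}=\sqrt{2\epsilon}\,e^{-1/2}$ bounds each $|Y_j|$ deterministically, and Hoeffding's inequality then yields, with probability at least $1-\delta$, $|\bbL_n^\epsilon\bm\phi_i(x)-\bbL^\epsilon\bm\phi_i(x)|\le C'\|\nabla\bm\phi_i\|_{L^\infty}\sqrt{\ln(1/\delta)/2n}$, where the hypothesis $\epsilon>n^{-1/(d+4)}$ prevents the effective sample size inside the kernel bandwidth from collapsing and keeps the constant under control as $n$ grows. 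A minor point: the diffusion is evaluated at a sampled node $x=x_k$, which I would handle by a union bound over $k$ (or by noting the $j=k$ summand vanishes).

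It remains to convert $\|\nabla\bm\phi_i\|_{L^\infty(\ccalM)}$ into a power of $\lambda_i$. From the on-diagonal heat-kernel bound $\sum_j e^{-t\lambda_j}\bm\phi_j(x)^2 = p_t(x,x)\le Ct^{-d/2}$, optimized at $t\sim\lambda_i^{-1}$, one gets $\|\bm\phi_i\|_{L^\infty}\le C\lambda_i^{d/4}$ (equivalently via $H^s\hookrightarrow C^0$ for $s>d/2$ and $\|\bm\phi_i\|_{H^s}\lesssim(1+\lambda_i)^{s/2}$ from spectral calculus); a Bernstein/Landau--Kolmogorov-type estimate $\|\nabla\bm\phi_i\|_{L^\infty}\le C\sqrt{\lambda_i}\,\|\bm\phi_i\|_{L^\infty}$ — or directly the gradient heat-kernel bound $\sum_j e^{-t\lambda_j}|\nabla\bm\phi_j(x)|^2\le Ct^{-d/2-1}$ — then gives $\|\nabla\bm\phi_i\|_{L^\infty}\le C\lambda_i^{(d+2)/4}$. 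Substituting this into the bias and sampling bounds and collecting all geometric constants into $C_1,C_2$ produces the stated inequality. The step I expect to be the main obstacle is the bias estimate: obtaining the $O(\sqrt\epsilon)$ rate with an explicit, geometry-only constant requires uniform control of the mismatch between the extrinsic Gaussian kernel and the intrinsic heat operator — the curvature corrections in the exponent and in the volume form — and expressing that constant through $\mathrm{vol}(\ccalM)$ alone is the delicate bookkeeping; a secondary subtlety is that the concentration constant in fact carries a factor of the form $\epsilon^{-c(d)}$, which must be checked to stay controlled under $\epsilon>n^{-1/(d+4)}$.
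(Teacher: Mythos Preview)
Your proposal is correct and follows essentially the same route as the paper: triangulate through the functional Laplacian $\bbL^\epsilon$, invoke the Belkin--Niyogi kernel estimate \cite{belkin2008towards} for the $O(\sqrt\epsilon)$ bias term, apply Hoeffding for the sampling term, and convert the eigenfunction regularity into the factor $\lambda_i^{(d+2)/4}$. The only cosmetic difference is that the paper phrases the common regularity factor as the Sobolev norm $\|\bm\phi_i\|_{H^{d/2+1}}$ (citing \cite[Lemma~4.4]{belkin2006convergence} for $\|\bm\phi_i\|_{H^{d/2+1}}\le C\lambda_i^{(d+2)/4}$) rather than your $\|\nabla\bm\phi_i\|_{L^\infty}$ via heat-kernel bounds; your explicit acknowledgment of the hidden $\epsilon^{-c(d)}$ in the Hoeffding constant is in fact more careful than the paper's own presentation.
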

\begin{proof}
See Appendix \ref{app:operator}.
\end{proof}
This theorem provides a point-wise upper bound for the error incurred when using the discrete graph Laplacian to approximate the LB operator operating on the eigenfunctions of $\ccalL$. We can see that this bound is not only related to the number of sampling points $n$, but also grows with the corresponding eigenvalue $\lambda_i$. This makes sense considering that higher eigenvalues correspond to eigenfunctions with high oscillation \cite{shi2010gradient}.

Based on this approximation result, we can further derive a non-asymptotic approximation bound relating the spectra of the graph Laplacian and the LB operator using the Davis-Khan theorem \cite{yu2015useful}. This result, stated in Theorem \ref{thm:converge-spectrum}, will allows us to analyze the approximation of a manifold filter by a graph filter through their spectral representations.

\begin{theorem}\label{thm:converge-spectrum}
Let $\ccalM\subset \reals^\mathsf{N}$ be a compact smooth differentiable $d$-dimensional manifold with LB operator $\ccalL$ whose spectrum is given by $\{\lambda_i,\bm\phi_i\}_{i=1}^\infty$. Let $\bbL_n^\epsilon$ be a discrete graph Laplacian defined as in \eqref{eqn:discrete_laplacian} with $\{x_1, x_2,\hdots,x_n\}$ sampled uniformly and independently at random from $\ccalM$ whose spectrum is given by $\{\lambda_{i,n}^\epsilon,\bm\phi_{i,n}^\epsilon\}_{i=1}^n$.
Fix $K\in \mathbb{N}$ and asumme that $n$ is sufficiently large so that $\epsilon=\epsilon(n)> n^{-1/(d+4)}$. Then, with probability at least $1-2e^{-n}$, we have 
\begin{equation}
    |\lambda_i-\lambda_{i,n}^\epsilon|\leq \Omega_1 \sqrt{\epsilon}, \quad 
    \|a_i\bm\phi_{i,n}^\epsilon-\bm\phi_i\|\leq \Omega_2 \sqrt{\epsilon},
\end{equation}
with $a_i=\{-1,1\}$ for all $i<K$. The constants $\Omega_1$, $\Omega_2$ depend on $\lambda_K$, the eigengap of $\ccalL$ i.e., $\theta=\min_{1\leq i\leq K}\{\lambda_i-\lambda_{i-1},\lambda_{i+1}-\lambda_{i}\}$, $d$ and the volume of $\ccalM$.
\end{theorem}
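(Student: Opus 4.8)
The plan is to use Theorem~\ref{thm:operator-diff} to show that each sampled Laplace--Beltrami eigenfunction $\bbP_n\bm\phi_i$ is an approximate eigenvector of the symmetric matrix $\bbL_n^\epsilon$ with approximate eigenvalue $\lambda_i$, and then to convert this --- via a variational (Courant--Fischer) argument for the eigenvalues and the Davis--Kahan theorem \cite{yu2015useful} for the eigenvectors --- into the claimed bounds. Since $\ccalL$ has infinitely many eigenvalues whereas $\bbL_n^\epsilon$ has only $n$, the whole argument is confined to the bottom $K$ of the two spectra, where the eigengap $\theta$ and the truncation level $\lambda_K$ are available to control the perturbation; this is why the constants $\Omega_1,\Omega_2$ are allowed to depend on $\lambda_K$, $\theta$, $d$ and $\mathrm{vol}(\ccalM)$.

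\emph{Step 1: approximate eigenpairs and near-orthonormality.} Fix $i\le K$. The bound in Theorem~\ref{thm:operator-diff} is uniform in $x\in\ccalM$, so evaluating it at the sample points $x_1,\dots,x_n$ and using $\ccalL\bm\phi_i=\lambda_i\bm\phi_i$ gives, on an event of probability at least $1-\delta$, $\|\bbL_n^\epsilon\bbP_n\bm\phi_i-\lambda_i\bbP_n\bm\phi_i\|_{L^2(\bbG_n)}\le\big(C_1\sqrt{\ln(1/\delta)/(2n)}+C_2\sqrt\epsilon\big)\lambda_i^{(d+2)/4}$. Taking a union bound over $i=1,\dots,K$ and choosing $\delta$ as permitted by the scaling $\epsilon>n^{-1/(d+4)}$ (which keeps the sampling term of the same order as $\sqrt\epsilon$), we obtain, with probability at least $1-2e^{-n}$, $\|\bbL_n^\epsilon\bbP_n\bm\phi_i-\lambda_i\bbP_n\bm\phi_i\|_{L^2(\bbG_n)}\le\eta$ for all $i\le K$, with $\eta=c\sqrt\epsilon\,\lambda_K^{(d+2)/4}$. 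On the same event we also control the Gram matrix of $\{\bbP_n\bm\phi_i\}_{i\le K}$: each entry $\langle\bbP_n\bm\phi_i,\bbP_n\bm\phi_j\rangle_{L^2(\bbG_n)}=\tfrac1n\sum_k\bm\phi_i(x_k)\bm\phi_j(x_k)$ is an average of i.i.d.\ bounded random variables whose mean is $\langle\bm\phi_i,\bm\phi_j\rangle_{L^2(\ccalM)}$ (bounded since $\bm\phi_i\in C(\ccalM)$ and $\ccalM$ is compact), so a concentration inequality places the Gram matrix within $O(\sqrt\epsilon)$ of the identity; in particular the $\bbP_n\bm\phi_i$ are linearly independent and $\|\bbP_n\bm\phi_i\|_{L^2(\bbG_n)}=1+O(\sqrt\epsilon)$.

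\emph{Step 2: eigenvalues and eigenvectors.} From $\|\bbL_n^\epsilon\bbP_n\bm\phi_i-\lambda_i\bbP_n\bm\phi_i\|_{L^2(\bbG_n)}\le\eta$ and the spectral theorem for $\bbL_n^\epsilon$, the interval $[\lambda_i-\eta/\|\bbP_n\bm\phi_i\|_{L^2(\bbG_n)},\,\lambda_i+\eta/\|\bbP_n\bm\phi_i\|_{L^2(\bbG_n)}]$ contains an eigenvalue of $\bbL_n^\epsilon$. Plugging the $K$ nearly-orthonormal approximate eigenvectors into the Courant--Fischer min--max characterization of $\lambda_{i,n}^\epsilon$, and using that the $\lambda_i$ are $\theta$-separated while $\eta=O(\sqrt\epsilon)$ is $\ll\theta$ for $n$ large, matches them one-to-one with the bottom $K$ eigenvalues of $\bbL_n^\epsilon$, giving $|\lambda_i-\lambda_{i,n}^\epsilon|\le\Omega_1\sqrt\epsilon$ for $i<K$. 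Once $\lambda_{i,n}^\epsilon$ is localized in this way its nearest neighbour in the spectrum of $\bbL_n^\epsilon$ is at distance at least $\theta/2$, so a $\sin\Theta$ / Davis--Kahan estimate \cite{yu2015useful} for $\bbL_n^\epsilon$ with residual $\eta$ and gap $\theta/2$ produces a sign $a_i\in\{-1,1\}$ for which the normalized vector $\bbP_n\bm\phi_i/\|\bbP_n\bm\phi_i\|_{L^2(\bbG_n)}$ lies within $O(\eta/\theta)$ of $a_i\bm\phi_{i,n}^\epsilon$ in $L^2(\bbG_n)$; combining this with $\|\bbP_n\bm\phi_i\|_{L^2(\bbG_n)}=1+O(\sqrt\epsilon)$ (and, if one prefers an $L^2(\ccalM)$ statement, with the near-orthonormality of Step 1) yields $\|a_i\bm\phi_{i,n}^\epsilon-\bm\phi_i\|\le\Omega_2\sqrt\epsilon$.

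\emph{Main obstacle.} The delicate part is the one-to-one matching in Step 2: Theorem~\ref{thm:operator-diff} by itself only places each $\bbP_n\bm\phi_i$ near \emph{some} eigenvalue of $\bbL_n^\epsilon$, and upgrading this to ``near the $i$-th eigenvalue $\lambda_{i,n}^\epsilon$'' requires the eigengap $\theta$, the near-orthonormality from Step 1, and a bootstrap in which eigenvalue convergence is proved first and then used to certify a genuine spectral gap around $\lambda_{i,n}^\epsilon$ before invoking Davis--Kahan. A second, more bookkeeping-heavy point is propagating the probability: the $1-2e^{-n}$ guarantee forces the pointwise estimate of Theorem~\ref{thm:operator-diff} and the Gram-matrix concentration to be run at exponentially small failure probability while all errors stay at the scale $\sqrt\epsilon$, which is exactly what the bandwidth condition $\epsilon>n^{-1/(d+4)}$ is there to accommodate.
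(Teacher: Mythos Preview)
Your proposal is correct and arrives at the same conclusion through a genuinely different route. The paper proceeds in the opposite order: it first bounds the eigenfunction error via a projection-type lemma of Davis--Kahan flavor (Lemma~\ref{lem:conv_eigenfunction}), which controls $\|a_i\bm\phi_{i,n}^\epsilon-\bm\phi_i\|$ directly by $\|\bbL_n^\epsilon\bm\phi_i-\ccalL\bm\phi_i\|$ divided by the mixed gap $\min_{j\neq i}|\lambda_j-\lambda_{i,n}^\epsilon|$; only then does it use the resulting closeness to bound $|\langle\bm\phi_i,\bm\phi_{i,n}^\epsilon\rangle|$ from below and plug into a second lemma (Lemma~\ref{lem:conv_eigenvalue}, adapted from \cite{dunson2021spectral}) to obtain the eigenvalue bound. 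You instead establish the eigenvalue bound first --- via Courant--Fischer applied to the $K$ nearly-orthonormal test vectors $\bbP_n\bm\phi_i$ --- and then feed the resulting spectral gap into Davis--Kahan for the eigenvectors. The trade-off is that the paper's route avoids the separate Gram-matrix concentration step but leaves the ``mixed gap'' $\min_{j\neq i}|\lambda_j-\lambda_{i,n}^\epsilon|$ in the denominator of Lemma~\ref{lem:conv_eigenfunction} somewhat implicit (exactly the bootstrap you flag as the main obstacle), whereas your approach makes the one-to-one matching explicit at the cost of an additional, though standard, near-orthonormality argument.
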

\begin{proof}
See Appendix \ref{app:spectrum}.
\end{proof}
\subsection{Graph filtering approximation of manifold filtering}

Equipped with theoretical approximation results for both the eigenvalues and eigenfunctions of the Laplacian operators, we can now prove that graph filters can approximate manifold filters well in the spectral domain. We first show that we can generalize the definition of manifold convolutional filters to sampled manifolds using the same impulse response $\tilde{h}$. 

From the definition in \eqref{eqn:convolution-conti}, recall that the manifold filter is parametric on the LB operator when the impulse response $\tilde{h}(t)$ is fixed. By replacing the LB operator with the discrete graph Laplacian defined in \eqref{eqn:discrete_laplacian}, we can then obtain a graph convolutional filter
\begin{equation}
    \bbg = \int_0^\infty \tilde{h}(t) e^{-t\bbL_n^\epsilon} \bbf \text{d}t =\bbh(\bbL_n^\epsilon)\bbf, \quad \bbg, \bbf \in \reals^n.
\end{equation}
This can be seen as a \textit{continuous-time} graph filtering process, different from the \textit{discrete-time} graph filtering process $\bbh_\bbG$ in \eqref{eqn:graph_convolution}, where the exponential term $e^{-\bbL_n^\epsilon}$ can be seen as the graph shift operator.
The frequency representation of this graph filter can be written as
\begin{equation}
   \bbg  = \sum_{i=1}^n \hat{h}(\lambda_{i,n}^\epsilon)\langle\bbf, \bm\phi_{i,n}^\epsilon \rangle_{L^2(\bbG_n)}\bm\phi_{i,n}^\epsilon,
\end{equation}
which reveals the full dependency on the eigendecomposition of $\bbL_n^\epsilon$. Thus, we can relate graph filtering and manifold filtering using the spectral approximation results for the graph Laplacian and LB operator presented in Theorem \ref{thm:converge-spectrum}.

Unlike the finite spectrum of the graph Laplacian, the LB operator $\ccalL$ possesses an infinite spectrum. We consider Weyl's law \cite{arendt2009weyl} when analyzing the spectral properties of $\ccalL$. This classical result states that the eigenvalues $\lambda_i$ of $\ccalL$ grow in the order of $i^{2/d}$. Therefore, the difference between neighboring eigenvalues becomes quite small in the high frequency spectrum, i.e., large eigenvalues accumulate. Leveraging this fact, we can use a partition strategy to separate the spectrum into finite groups as in Definition \ref{def:alpha-spectrum}.


\begin{definition}\cite{wang2021stability} ($\alpha$-separated spectrum)\label{def:alpha-spectrum}
The $\alpha$-separated spectrum of LB operator $\ccalL$ is defined as the partition $\Lambda_1(\alpha) \cup \ldots\cup \Lambda_N(\alpha)$ such that $\lambda_i \in \Lambda_k(\alpha)$ and $\lambda_j \in \Lambda_l(\alpha)$, $k \neq l$, satisfying
$|\lambda_i - \lambda_j| > \alpha$.
\end{definition}

The following defined $\alpha$-FDT filter allows obtaining the $\alpha$-separated spectrum as in Definition \ref{def:alpha-spectrum}.

\begin{definition}\cite{wang2021stability} ($\alpha$-FDT filter)\label{def:alpha-filter}
The $\alpha$-frequency difference threshold ($\alpha$-FDT) filter is defined as a filter $\bbh(\ccalL)$ whose frequency response satisfies
\begin{equation} \label{eq:fdt-filter}
    |\hhath(\lambda_i)-\hhath(\lambda_j)|\leq \gamma_k \mbox{ for all } \lambda_i, \lambda_j \in \Lambda_k(\alpha) 
\end{equation}
with $\gamma_k\leq \gamma$ for $k=1, \ldots,N$.
\end{definition}

To derive an approximation bound, we will also need the manifold filter to have a Lipschitz frequency response as in Definition \ref{def:lipschitz}.

\begin{definition}[Lipschitz filter] \label{def:lipschitz}
A filter is $A_h$-Lispchitz if its frequency response is Lipschitz continuous with Lipschitz constant $A_h$,
\begin{equation}
    |\hhath(a)-\hhath(b)| \leq A_h |a-b|\text{ for all } a,b \in (0,\infty)\text{.}
\end{equation}
\end{definition}

Equipped with the above requirements for the filter frequency response, we can finally establish the upper bound on the manifold filter approximation error on the sampled manifold.

\begin{theorem}
\label{thm:converge-MNN} 
Let $X=\{x_1, x_2,...x_n\}$ be a set of $n$ points sampled by an operator $\bbP_n$ \eqref{eqn:sampling} from a $d$-dimensional manifold $\ccalM \subset \reals^N$. Let $\bbG_n$ be a discrete graph approximation of $\ccalM$ constructed from $X$ with weight values as in \eqref{eqn:weight} with $\epsilon = \epsilon(n)\geq  n^{-1/(d+4)}$. Let $\bbh(\cdot)$ be the convolutional filter parameterized by the LB operator $\ccalL$ of the manifold $\ccalM$ \eqref{eqn:convolution-conti} or by the discrete graph Laplacian operator $\bbL_n^\epsilon$ of the graph $\bbG_n$. Under the assumption that the frequency response of filter $\bbh$ is Lipschitz continuous and $\alpha$-FDT with $\alpha^2 \gg \epsilon$ and $\gamma = \Omega_2'\sqrt{\epsilon}/\alpha$, with probability at least $1-2n^{-2}$ it holds that
\begin{align}
&\nonumber  \nonumber \|\bbh(\bbL_n^\epsilon)\bbP_n f - \bbP_n\bbh( \ccalL)f\|_{L^2(\bbG_n)}\\&\qquad \qquad \qquad 
\leq
 \left(\frac{ N\Omega_2'}{\alpha} +A_h \Omega_1 \right)\sqrt{\epsilon}+C_{gc}\sqrt{\frac{\log n}{{n}}}
\end{align}
where $N$ is the partition size of $\alpha$-FDT filter and $C_{gc}$ is related with $d$ and the volume of $\ccalM$.
\end{theorem}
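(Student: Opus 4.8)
The plan is to bound the error by a triangle inequality that isolates two distinct sources of discrepancy: a \emph{spectral} error, from replacing the LB operator $\ccalL$ by the graph Laplacian $\bbL_n^\epsilon$ inside the frequency response, and a \emph{sampling} error, from replacing the continuous norm $\|\cdot\|_{L^2(\ccalM)}$ and inner products $\langle f,\bm\phi_i\rangle_{L^2(\ccalM)}$ by their empirical counterparts on $\bbG_n$. To organize the spectral part, I fix $K$ so that the first $N-1$ pieces of the $\alpha$-separated spectrum (Definition~\ref{def:alpha-spectrum}) are exactly $\lambda_1,\dots,\lambda_{K-1}$ and the last piece $\Lambda_N(\alpha)$ collects all remaining (large) eigenvalues. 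The gap separating $\Lambda_{N-1}(\alpha)$ from $\Lambda_N(\alpha)$ is at least $\alpha$, and since $\alpha^2\gg\epsilon$ this gap dominates the eigenvalue perturbation $\Omega_1\sqrt{\epsilon}$ of Theorem~\ref{thm:converge-spectrum}, so the ``low'' and ``high'' parts of the graph spectrum remain separated as well. I then split both $\bbP_n\bbh(\ccalL)f$ and $\bbh(\bbL_n^\epsilon)\bbP_n f$ into a low-frequency component (modes $i<K$) and a high-frequency component (modes $i\ge K$) and estimate the two differences separately.

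For the \textbf{low-frequency component} I would expand $\bbP_n\bbh(\ccalL)f$ in $\{\bbP_n\bm\phi_i\}_{i<K}$ and $\bbh(\bbL_n^\epsilon)\bbP_n f$ in $\{\bm\phi_{i,n}^\epsilon\}_{i<K}$ and compare mode by mode. Three ingredients enter: Lipschitz continuity of the frequency response (Definition~\ref{def:lipschitz}) together with Theorem~\ref{thm:converge-spectrum} gives $|\hat h(\lambda_i)-\hat h(\lambda_{i,n}^\epsilon)|\le A_h\Omega_1\sqrt{\epsilon}$; Theorem~\ref{thm:converge-spectrum} also gives $\|a_i\bm\phi_{i,n}^\epsilon-\bm\phi_i\|\le\Omega_2\sqrt{\epsilon}$; and a Hoeffding/Bernstein concentration bound, valid since each $\bm\phi_i$ is continuous on the compact $\ccalM$, controls $|\langle\bbP_n f,\bm\phi_{i,n}^\epsilon\rangle_{L^2(\bbG_n)}-\langle f,\bm\phi_i\rangle_{L^2(\ccalM)}|$ by a term of order $\sqrt{\log n/n}$ plus $\Omega_2\sqrt{\epsilon}$. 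Summing the per-mode errors over the finitely many $i<K$ and absorbing constants into $\Omega_2'$ yields a contribution of order $A_h\Omega_1\sqrt{\epsilon}$ plus $\Omega_2'$-type $\sqrt{\epsilon}$ terms plus an $O(\sqrt{\log n/n})$ term.

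For the \textbf{high-frequency component} I cannot match eigenfunctions individually, so I would instead exploit the $\alpha$-FDT hypothesis (Definition~\ref{def:alpha-filter}): $\hat h$ varies by at most $\gamma_N\le\gamma$ on $\Lambda_N(\alpha)$. Writing the high-frequency manifold output as $c\,\Pi_{\ge K}^{\ccalM}f+e$, where $c=\hat h(\lambda_K)$, $\Pi_{\ge K}^{\ccalM}$ is the spectral projector of $\ccalL$ onto $\Lambda_N(\alpha)$, and $\|e\|_{L^2(\ccalM)}\le\gamma\|f\|_{L^2(\ccalM)}$, and doing the same on the graph with projector $\Pi_{\ge K}^{n}$, the problem reduces to bounding $\|c\,\Pi_{\ge K}^{n}\bbP_n f-c\,\bbP_n\Pi_{\ge K}^{\ccalM}f\|_{L^2(\bbG_n)}$ and two FDT residuals of size at most $\gamma\|f\|$. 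Since $\Pi_{\ge K}^{\ccalM}=I-\sum_{i<K}\langle\cdot,\bm\phi_i\rangle\bm\phi_i$ and $\Pi_{\ge K}^{n}=I-\sum_{i<K}\langle\cdot,\bm\phi_{i,n}^\epsilon\rangle\bm\phi_{i,n}^\epsilon$, the projector difference is again controlled by the low-frequency eigenfunction and empirical-inner-product errors already estimated. Accumulating the FDT residuals over the partitions ($\sum_k\gamma_k\le N\gamma$) and substituting the hypothesis $\gamma=\Omega_2'\sqrt{\epsilon}/\alpha$ produces the $N\Omega_2'\sqrt{\epsilon}/\alpha$ term.

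Finally I would assemble the pieces: combine the low- and high-frequency bounds, apply one more concentration estimate $\big|\,\|\bbP_n g\|_{L^2(\bbG_n)}-\|g\|_{L^2(\ccalM)}\,\big|\lesssim\|g\|_\infty\sqrt{\log n/n}$ to $g=\bbh(\ccalL)f$ (bounded because $\hat h$ is bounded and the eigenfunctions are continuous on the compact $\ccalM$) to transfer every $L^2(\ccalM)$-norm above into an $L^2(\bbG_n)$-norm, collect the sampling constants into $C_{gc}$, and take a union bound over the event of Theorem~\ref{thm:converge-spectrum} (probability $1-2e^{-n}$) and the $O(K)$ concentration events (each $1-n^{-c}$), which for $n$ large enough gives the stated probability $1-2n^{-2}$. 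The genuine obstacle is the high-frequency analysis: with no pointwise control on the large eigenvalues or their eigenfunctions, one must argue entirely through spectral projectors, and this succeeds only because (a) the $\alpha$-FDT property makes the filter essentially a scalar on the tail, (b) the condition $\alpha^2\gg\epsilon$ keeps the low/high split stable under the Laplacian perturbation, and (c) the complementary (high) projector inherits convergence from the low-frequency eigenfunctions; tracking the dependence on $N$, $\alpha$, and the per-mode constants consistently through this reduction is the delicate part.
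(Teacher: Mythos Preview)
Your proposal is essentially correct and uses the same three ingredients as the paper: the eigenvalue/eigenfunction perturbation bounds of Theorem~\ref{thm:converge-spectrum}, the Lipschitz property of $\hat h$ on the resolvable part of the spectrum, the $\alpha$-FDT flatness on the unresolvable part, and an $O(\sqrt{\log n/n})$ concentration estimate for empirical inner products. The paper, however, organizes the argument differently. Rather than splitting the \emph{signal} into low- and high-frequency projections as you do, it splits the \emph{filter function} as $\hat h = h^{(0)} + \sum_{l\in\ccalK_m} h^{(l)}$, where $h^{(0)}$ is supported on the singleton partitions $\ccalK_s$ (handled exactly as your low-frequency analysis, via Lipschitz plus eigenfunction convergence plus sampling concentration) and each $h^{(l)}$ is supported on one multi-element partition $\Lambda_l(\alpha)$ and is bounded there directly by $|\hat h(\lambda)-\hat h(C_l)|\le\gamma$, giving a $2\gamma\|\bbP_n f\|$ contribution per partition and hence the $N\gamma$ term.

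Two practical differences are worth noting. First, your assumption that the partition structure is ``$N-1$ singletons followed by one tail'' is a special case; the paper's $\ccalK_s/\ccalK_m$ split accommodates any $\alpha$-separated partition, which is why the factor $N$ (the total number of partitions) appears naturally there, whereas in your setup the provenance of $N$ is less transparent. Second, your complementary-projector trick $\Pi_{\ge K}=I-\Pi_{<K}$ is cleaner than the paper's somewhat terse bound $\|\bbP_n\bbh^{(l)}(\ccalL)f-\bbh^{(l)}(\bbL_n^\epsilon)\bbP_n f\|\le 2\gamma\|\bbP_n f\|$, which implicitly relies on the same reduction but does not spell it out. Either route closes to the stated bound.
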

\begin{proof}
See Appendix \ref{app:nn}.
\end{proof}

From this theorem, we see that if we take $\epsilon = n^{-1/(d+4)}$, the difference between filtering on the manifold and the sampled manifold is in the order of $O(n^{-1/(2d+8)})$. Thus, with a large enough number of sampling points, this approximation provides a good accuracy with high probability. This provides a formal theoretical guarantee for the ability of graph filters to approximate manifold filters on sampled manifolds, enabling approximate linear information processing on continuous non-Euclidean domains.


\section{Numerical experiments} \label{sec:sims}




We consider a problem of automatic navigation control of an agent \cite{cervino2022learning}. We intend to control an agent to find a path to the goal without colliding to obstacles. We sample grid points over the free space which means avoiding the obstacle areas to construct a graph structure.  We generate four trajectories with Dijkstra's shortest path algorithm leading a starting point to the goal. Each point along the trajectory is labeled with the direction of the velocity. The adjacency matrix can be calculated based on the weight value defined in \eqref{eqn:weight} with the Euclidean distance between each two points. We note that if the direct path between two points goes through the obstacles, we set the distance between them as infinity, i.e. the weight value as zero, so as to better capture the geometric structure. The input graph signal is the position of the nodes and the output of the graph filter is the direction of this node leading to the goal. We construct a one-layer and two-layer graph filters with $5$ filter taps in each layer. We train all the architectures for $30,000$ epochs with SGD optimizer with the learning rate set as $0.0002$. The performance of the learned graph filter is testified by randomly generating $100$ starting points and computing the trajectories. If the trajectory can reach the goal without colliding into the obstacles, the trajectory is marked as ``success".

\begin{figure}
     \begin{subfigure}[b]{0.2\textwidth}
         \centering
         \includegraphics[width=1.3\textwidth]{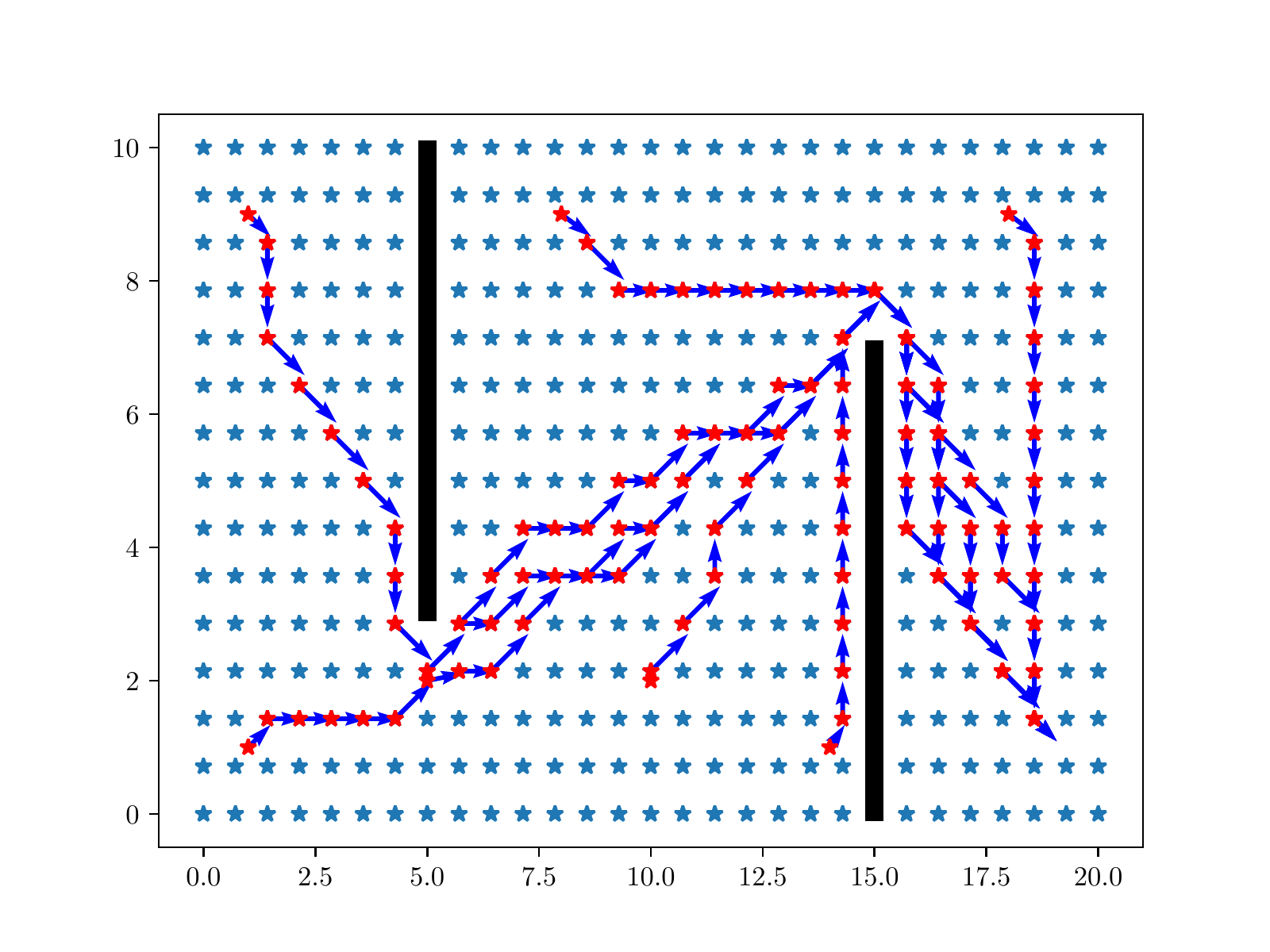}
         \caption{Training dataset}
         \label{fig:train400}
     \end{subfigure}
    \hspace{1.5em}
     \begin{subfigure}[b]{0.2\textwidth}
         \centering
         \includegraphics[width=1.3\textwidth]{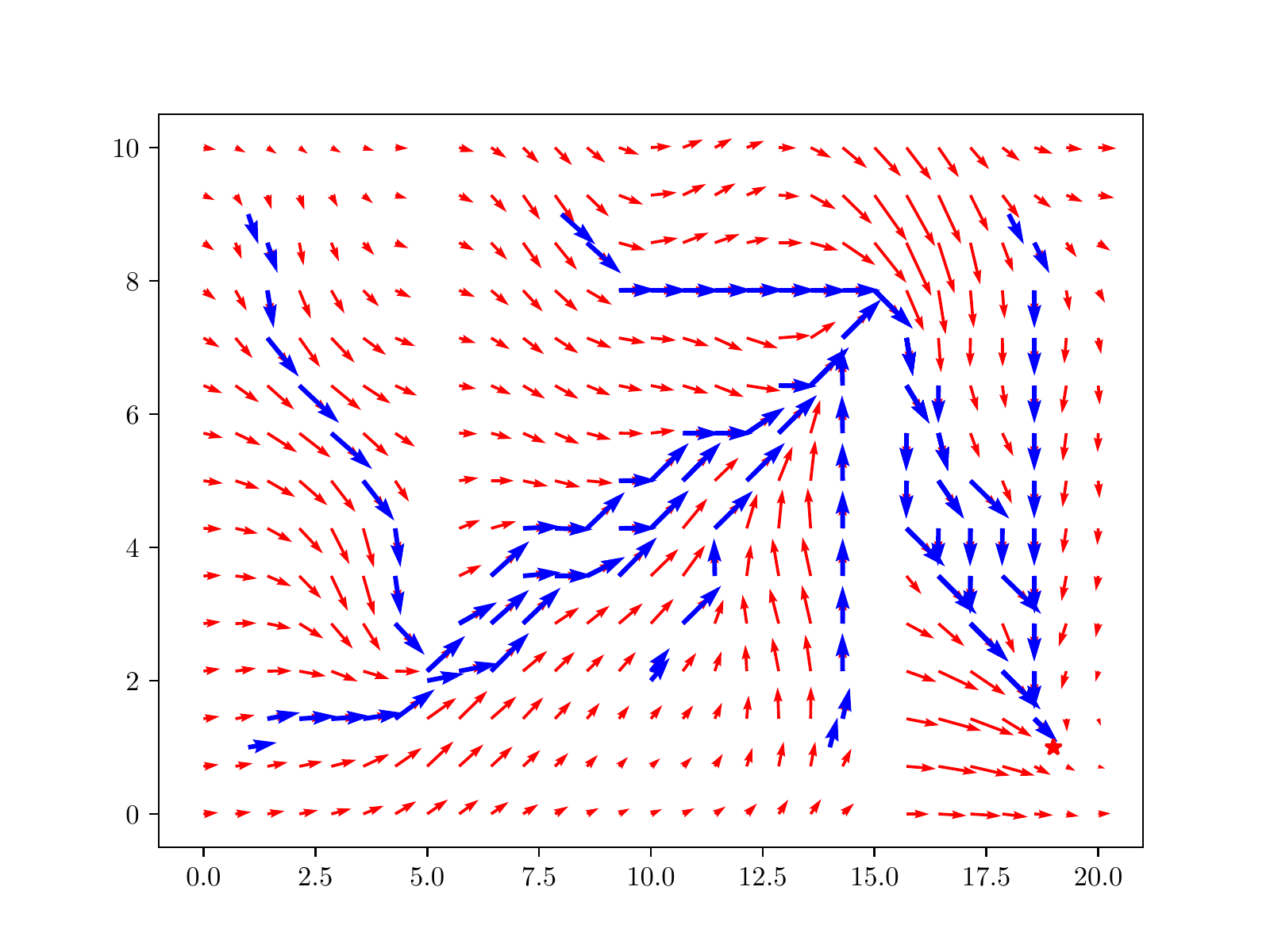}
         \caption{Learned direction}
         \label{fig:learned400}
     \end{subfigure}
   
        \caption{Graph filtering performance with $n=413$}
        \label{fig:400nodes}
\end{figure}

\begin{figure}
     \begin{subfigure}[b]{0.2\textwidth}
         \centering
         \includegraphics[width=1.3\textwidth]{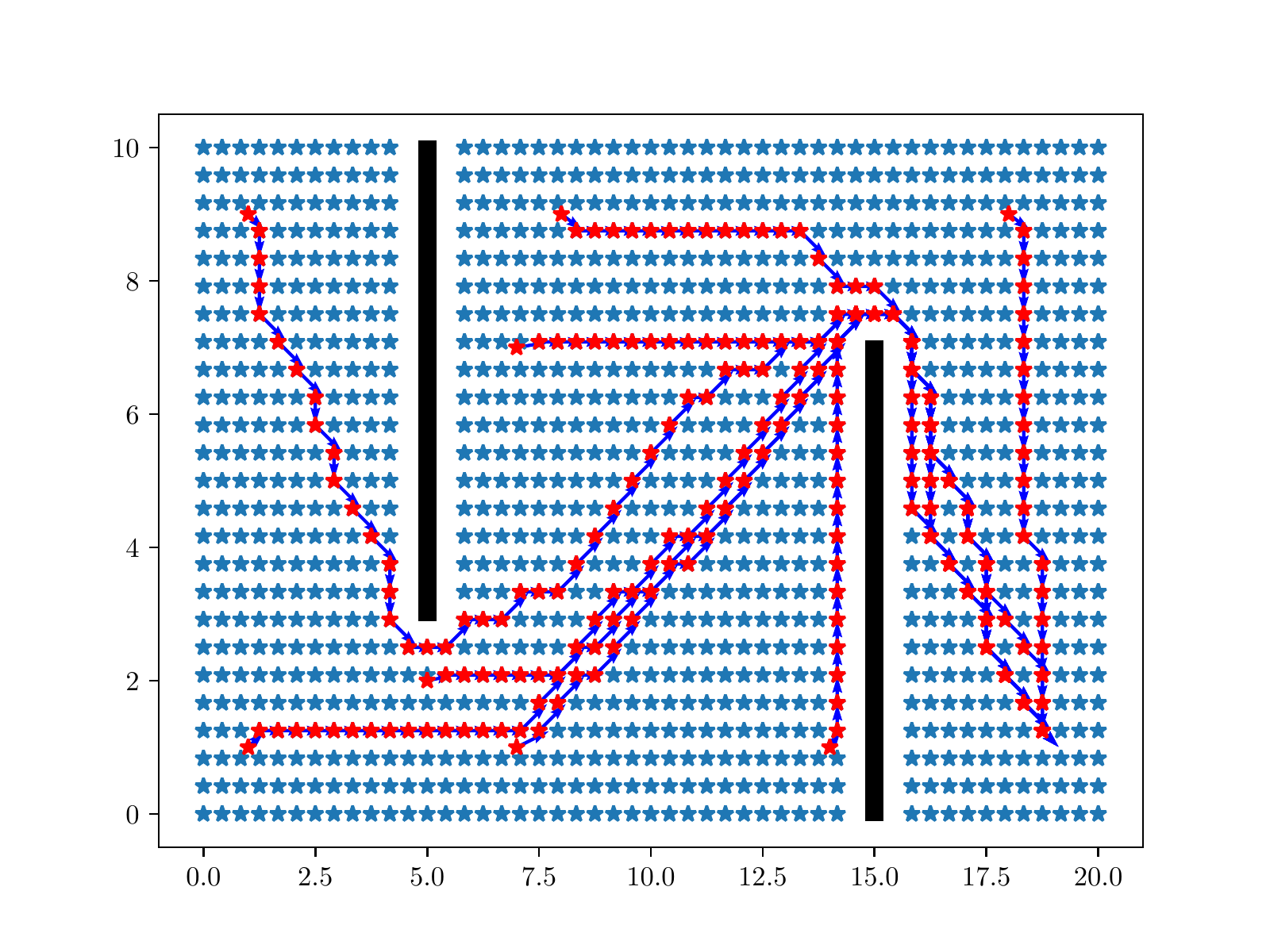}
         \caption{Training dataset}
         \label{fig:train1000}
     \end{subfigure}
    \hspace{1.5em}
     \begin{subfigure}[b]{0.2\textwidth}
         \centering
         \includegraphics[width=1.3\textwidth]{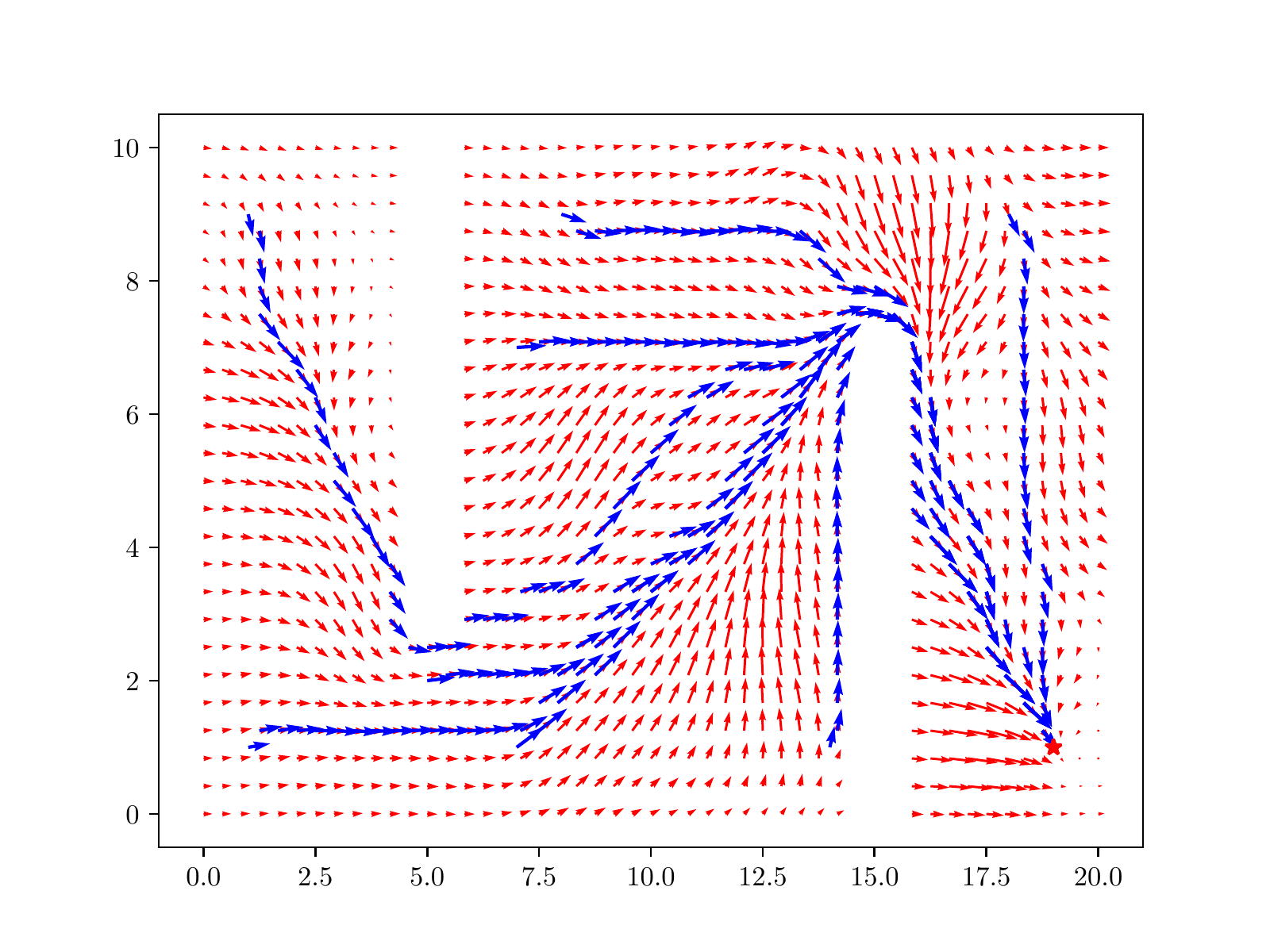}
         \caption{Learned direction}
         \label{fig:learned1000}
     \end{subfigure}
   
        \caption{Graph filtering performance with $n=1117$}
        \label{fig:1000nodes}
\end{figure}
We first sample $n$ points uniformly over the free space avoiding the black obstacles as shown in Fig. \ref{fig:train400} ($n=413$) and Fig. \ref{fig:train1000} ($n=1117$). In Fig. \ref{fig:train400} and Fig. \ref{fig:train1000}, the red stars depicting labeled points as training dataset and the blue arrows attached to the red stars point to the optimal directions leading to the goal. The blue stars depict unlabeled points over the free space. We implement a 1-layer or 2-layer graph filters to learn and estimate the directions for all the unlabeled points.   Figure \ref{fig:learned400} and \ref{fig:learned1000} show the learned directions starting from each unlabeled points as the red arrows show using graph filtering. The numbers of successful trajectories within $100$ generated testing are shown in Table \ref{tb:results}. We can see that graph filtering can efficiently learn the potential successful trajectories for unlabeled points based on these sampled manifolds. More sampling points characterize the free space more accurately, which leads to more successful testing trajectories. This is also in accordance with our theoretical result presented in Theorem \ref{thm:converge-MNN}. Moreover, more layers of filters also learn a more accurate prediction.

\vspace{-3mm}
\begin{table}[h!]
\centering
\begin{tabular}{l|c| c} \hline
    & $n=413$  & $n=1117$  \\ \hline
1-layer graph filter		&74  & 75  \\ \hline
2-layer graph filter	& 79  &  84 \\ \hline
\end{tabular}
\caption{Number of successful trajectories. }
\label{tb:results}
\vspace{-4mm}
\end{table}


\section{Conclusions} \label{sec:conclusions}

In this paper, we introduced manifold convolutional filters defined with the exponential Laplace-Beltrami operator to process geometric data. To access the manifold model, we sample uniformly over the manifold and construct a graph model as a sampled manifold. We first prove the non-asymptotic approximation error bound of discrete graph Laplacian to LB operator in both operator and spectral aspects. By transferring the manifold filtering to the sampled manifold, the approximation of the graph filtering to the manifold filtering can be derived. The non-asymptotic error bound decreases with the number of sampling points growing. The approximation of graph filtering is verified empirically on a navigation control problem over a manifold.


\bibliographystyle{IEEEtran}
\bibliography{references}

\begin{thebibliography}{10}
\providecommand{\url}[1]{#1}
\csname url@samestyle\endcsname
\providecommand{\newblock}{\relax}
\providecommand{\bibinfo}[2]{#2}
\providecommand{\BIBentrySTDinterwordspacing}{\spaceskip=0pt\relax}
\providecommand{\BIBentryALTinterwordstretchfactor}{4}
\providecommand{\BIBentryALTinterwordspacing}{\spaceskip=\fontdimen2\font plus
\BIBentryALTinterwordstretchfactor\fontdimen3\font minus
  \fontdimen4\font\relax}
\providecommand{\BIBforeignlanguage}[2]{{%
\expandafter\ifx\csname l@#1\endcsname\relax
\typeout{** WARNING: IEEEtran.bst: No hyphenation pattern has been}%
\typeout{** loaded for the language `#1'. Using the pattern for}%
\typeout{** the default language instead.}%
\else
\language=\csname l@#1\endcsname
\fi
#2}}
\providecommand{\BIBdecl}{\relax}
\BIBdecl

\bibitem{li2020graph}
Q.~Li, F.~Gama, A.~Ribeiro, and A.~Prorok, ``Graph neural networks for
  decentralized multi-robot path planning,'' in \emph{2020 IEEE/RSJ
  International Conference on Intelligent Robots and Systems (IROS)}.\hskip 1em
  plus 0.5em minus 0.4em\relax IEEE, 2020, pp. 11\,785--11\,792.

\bibitem{wang2020learning}
Z.~Wang and M.~Gombolay, ``Learning scheduling policies for multi-robot
  coordination with graph attention networks,'' \emph{IEEE Robotics and
  Automation Letters}, vol.~5, no.~3, pp. 4509--4516, 2020.

\bibitem{wang2022stable}
Z.~Wang, L.~Ruiz, M.~Eisen, and A.~Ribeiro, ``Stable and transferable wireless
  resource allocation policies via manifold neural networks,'' in \emph{ICASSP
  2022-2022 IEEE International Conference on Acoustics, Speech and Signal
  Processing (ICASSP)}.\hskip 1em plus 0.5em minus 0.4em\relax IEEE, 2022, pp.
  8912--8916.

\bibitem{zeng20193d}
J.~Zeng, G.~Cheung, M.~Ng, J.~Pang, and C.~Yang, ``3d point cloud denoising
  using graph laplacian regularization of a low dimensional manifold model,''
  \emph{IEEE Transactions on Image Processing}, vol.~29, pp. 3474--3489, 2019.

\bibitem{devanne20143}
M.~Devanne, H.~Wannous, S.~Berretti, P.~Pala, M.~Daoudi, and A.~Del~Bimbo,
  ``3-d human action recognition by shape analysis of motion trajectories on
  riemannian manifold,'' \emph{IEEE transactions on cybernetics}, vol.~45,
  no.~7, pp. 1340--1352, 2014.

\bibitem{ortega2018graph}
A.~Ortega, P.~Frossard, J.~Kova{\v{c}}evi{\'c}, J.~M. Moura, and
  P.~Vandergheynst, ``Graph signal processing: Overview, challenges, and
  applications,'' \emph{Proceedings of the IEEE}, vol. 106, no.~5, pp.
  808--828, 2018.

\bibitem{gama2019convolutional}
F.~Gama, A.~G. Marques, G.~Leus, and A.~Ribeiro, ``Convolutional neural network
  architectures for signals supported on graphs,'' \emph{IEEE Transactions on
  Signal Processing}, vol.~67, no.~4, pp. 1034--1049, 2019.

\bibitem{wang2022convolutional}
Z.~Wang, L.~Ruiz, and A.~Ribeiro, ``Convolutional neural networks on manifolds:
  From graphs and back,'' \emph{arXiv preprint arXiv:2210.00376}, 2022.

\bibitem{masci2015geodesic}
J.~Masci, D.~Boscaini, M.~Bronstein, and P.~Vandergheynst, ``Geodesic
  convolutional neural networks on riemannian manifolds,'' in \emph{Proceedings
  of the IEEE international conference on computer vision workshops}, 2015, pp.
  37--45.

\bibitem{bronstein2017geometric}
M.~M. Bronstein, J.~Bruna, Y.~LeCun, A.~Szlam, and P.~Vandergheynst,
  ``Geometric deep learning: going beyond euclidean data,'' \emph{IEEE Signal
  Processing Magazine}, vol.~34, no.~4, pp. 18--42, 2017.

\bibitem{shi2020point}
W.~Shi and R.~Rajkumar, ``Point-gnn: Graph neural network for 3d object
  detection in a point cloud,'' in \emph{Proceedings of the IEEE/CVF conference
  on computer vision and pattern recognition}, 2020, pp. 1711--1719.

\bibitem{monti2017geometric}
F.~Monti, D.~Boscaini, J.~Masci, E.~Rodola, J.~Svoboda, and M.~M. Bronstein,
  ``Geometric deep learning on graphs and manifolds using mixture model cnns,''
  in \emph{Proceedings of the IEEE conference on computer vision and pattern
  recognition}, 2017, pp. 5115--5124.

\bibitem{wang2021stability}
Z.~Wang, L.~Ruiz, and A.~Ribeiro, ``Stability of neural networks on riemannian
  manifolds,'' in \emph{2021 29th European Signal Processing Conference
  (EUSIPCO)}.\hskip 1em plus 0.5em minus 0.4em\relax IEEE, 2021, pp.
  1845--1849.

\bibitem{shuman2013emerging}
D.~I. Shuman, S.~K. Narang, P.~Frossard, A.~Ortega, and P.~Vandergheynst, ``The
  emerging field of signal processing on graphs: Extending high-dimensional
  data analysis to networks and other irregular domains,'' \emph{IEEE signal
  processing magazine}, vol.~30, no.~3, pp. 83--98, 2013.

\bibitem{ruiz2020graphon}
L.~Ruiz, L.~Chamon, and A.~Ribeiro, ``Graphon neural networks and the
  transferability of graph neural networks,'' \emph{Advances in Neural
  Information Processing Systems}, vol.~33, pp. 1702--1712, 2020.

\bibitem{keriven2020convergence}
N.~Keriven, A.~Bietti, and S.~Vaiter, ``Convergence and stability of graph
  convolutional networks on large random graphs,'' \emph{Advances in Neural
  Information Processing Systems}, vol.~33, pp. 21\,512--21\,523, 2020.

\bibitem{calder2019improved}
J.~Calder and N.~G. Trillos, ``Improved spectral convergence rates for graph
  laplacians on epsilon-graphs and k-nn graphs,'' \emph{arXiv preprint
  arXiv:1910.13476}, 2019.

\bibitem{rosenberg1997laplacian}
S.~Rosenberg and R.~Steven, \emph{The Laplacian on a Riemannian manifold: an
  introduction to analysis on manifolds}.\hskip 1em plus 0.5em minus
  0.4em\relax Cambridge University Press, 1997, no.~31.

\bibitem{belkin2008towards}
M.~Belkin and P.~Niyogi, ``Towards a theoretical foundation for laplacian-based
  manifold methods,'' \emph{Journal of Computer and System Sciences}, vol.~74,
  no.~8, pp. 1289--1308, 2008.

\bibitem{merris1995survey}
R.~Merris, ``A survey of graph laplacians,'' \emph{Linear and Multilinear
  Algebra}, vol.~39, no. 1-2, pp. 19--31, 1995.

\bibitem{shi2010gradient}
Y.~Shi and B.~Xu, ``Gradient estimate of an eigenfunction on a compact
  riemannian manifold without boundary,'' \emph{Annals of Global Analysis and
  Geometry}, vol.~38, no.~1, pp. 21--26, 2010.

\bibitem{yu2015useful}
Y.~Yu, T.~Wang, and R.~J. Samworth, ``A useful variant of the davis--kahan
  theorem for statisticians,'' \emph{Biometrika}, vol. 102, no.~2, pp.
  315--323, 2015.

\bibitem{arendt2009weyl}
W.~Arendt, R.~Nittka, W.~Peter, and F.~Steiner, ``Weyl’s law: Spectral
  properties of the laplacian in mathematics and physics,'' \emph{Mathematical
  analysis of evolution, information, and complexity}, pp. 1--71, 2009.

\bibitem{cervino2022learning}
J.~Cervino, L.~Chamon, B.~D. Haeffele, R.~Vidal, and A.~Ribeiro, ``Learning
  globally smooth functions on manifolds,'' \emph{arXiv preprint
  arXiv:2210.00301}, 2022.

\bibitem{belkin2006convergence}
M.~Belkin and P.~Niyogi, ``Convergence of laplacian eigenmaps,'' \emph{Advances
  in neural information processing systems}, vol.~19, 2006.

\bibitem{von2008consistency}
U.~Von~Luxburg, M.~Belkin, and O.~Bousquet, ``Consistency of spectral
  clustering,'' \emph{The Annals of Statistics}, pp. 555--586, 2008.

\bibitem{dunson2021spectral}
D.~B. Dunson, H.-T. Wu, and N.~Wu, ``Spectral convergence of graph laplacian
  and heat kernel reconstruction in $l^\infty$ from random samples,''
  \emph{Applied and Computational Harmonic Analysis}, vol.~55, pp. 282--336,
  2021.

\end{thebibliography}

\appendix
 {\section{Appendix}
 \subsection{Proof of Theorem \ref{thm:operator-diff}}
\label{app:operator}
We decompose the operator difference between the graph Laplacian and the LB operator with an intermediate term $\bbL^\epsilon$, which is the functional approximation defined in \eqref{eqn:functional_laplacian}. We first focus on the operator difference between $\bbL^\epsilon$ and $\ccalL$. From \cite{belkin2008towards}, we can get the bound as
\begin{equation}
    \|\bbL^\epsilon \bm\phi_i- \ccalL \bm\phi_i\|\leq C\sqrt{\epsilon} \|\bm\phi_i\|_{H^{d/2+1}},
\end{equation}
For the Sobolev norm of eigenfunction $\bm\phi_i$, according to \cite[Lemma~4.4]{belkin2006convergence} we have
\begin{equation}
    \|\bm\phi_i\|_{H^{d/2+1}}\leq C \lambda_i^{\frac{d+2}{4}},
\end{equation}
which leads to 
\begin{equation}
\label{eqn:Lepsilon-L}
    \|\bbL^\epsilon \bm\phi_i- \ccalL \bm\phi_i\|\leq C_1 \sqrt{\epsilon} \lambda_i^{\frac{d+2}{4}}.
\end{equation}
For the operator difference between $\bbL_n^\epsilon$ and $\bbL^\epsilon$ with Hoeffding's inequality as 
\begin{align}
    \mathbb{P}\left( |\bbL_n^\epsilon \bm\phi_i(x) - \bbL^\epsilon \bm\phi_i(x)|> \epsilon_1\right) \leq \exp\left( - \frac{2n\epsilon_1^2}{\|\bm\phi_i\|_{H^{d/2+1}}^2} \right).
\end{align}
Therefore, we can claim that with probability at least $1-\delta$, we have
\begin{align}
\label{eqn:Lnepsilon-Lepsilon}
    |\bbL_n^\epsilon \bm\phi_i(x) - \bbL^\epsilon \bm\phi_i(x)|\leq \sqrt {\frac{\ln 1/\delta}{2n}} \|\bm\phi_i\|_{H^{d/2+1}}.
\end{align}
Combining \eqref{eqn:Lepsilon-L} and \eqref{eqn:Lnepsilon-Lepsilon} with triangle inequality, we can get the conclusion in Theorem \ref{thm:operator-diff}.

\subsection{Proof of Theorem \ref{thm:converge-spectrum}}
We first import two lemmas to help prove the spectral properties.
\label{app:spectrum}
\begin{lemma}\label{lem:conv_eigenfunction}
Let $\bbA, \bbB$ be self-adjoint operators with $\{\lambda_i(\bbA), \bbu_i\}_{i=1}^\infty$ and $\{\lambda_i(\bbB), \bbw_i\}_{i=1}^\infty$ as the corresponding spectrum. Let $Pr_{\bbw_i}$ be the orthogonal projection operation onto the subspace generated by $\bbw_i$. Then we have
\begin{equation}
    \|a_i\bbu_i-\bbw_i\|\leq 2\|\bbu_i-Pr_{\bbw_i}\bbu_i\|\leq \frac{2\|\bbB\bbu_i-\bbA\bbu_i\|}{\min_{j\neq i} |\lambda_j(\bbB)-\lambda_i(\bbA)|}.
\end{equation}
\end{lemma}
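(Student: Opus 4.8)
The plan is to establish the two inequalities separately, working in the eigenbasis $\{\bbw_j\}_{j=1}^\infty$ of $\bbB$, which forms a complete orthonormal system because $\bbB$ is self-adjoint with discrete spectrum. Throughout I take the eigenvectors normalized, $\|\bbu_i\|=\|\bbw_i\|=1$. For the right-hand (Davis--Kahan) estimate, I would first observe that $Pr_{\bbw_i}$ is the rank-one projection $Pr_{\bbw_i}\bbu_i=\langle\bbu_i,\bbw_i\rangle\bbw_i$, so expanding $\bbu_i$ in the basis $\{\bbw_j\}$ gives $\bbu_i-Pr_{\bbw_i}\bbu_i=\sum_{j\neq i}\langle\bbu_i,\bbw_j\rangle\bbw_j$ and hence $\|\bbu_i-Pr_{\bbw_i}\bbu_i\|^2=\sum_{j\neq i}|\langle\bbu_i,\bbw_j\rangle|^2$. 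The key identity comes from testing the operator difference against $\bbw_j$: using $\bbA\bbu_i=\lambda_i(\bbA)\bbu_i$ and moving $\bbB$ onto its own eigenvector by self-adjointness, $\langle(\bbB-\bbA)\bbu_i,\bbw_j\rangle=(\lambda_j(\bbB)-\lambda_i(\bbA))\langle\bbu_i,\bbw_j\rangle$. Solving for the Fourier coefficient, bounding each gap below by $\min_{j\neq i}|\lambda_j(\bbB)-\lambda_i(\bbA)|$, and then invoking Parseval's identity $\sum_{j\neq i}|\langle(\bbB-\bbA)\bbu_i,\bbw_j\rangle|^2\leq\|(\bbB-\bbA)\bbu_i\|^2$, I obtain $\|\bbu_i-Pr_{\bbw_i}\bbu_i\|^2\leq\|(\bbB-\bbA)\bbu_i\|^2/\min_{j\neq i}|\lambda_j(\bbB)-\lambda_i(\bbA)|^2$; a square root finishes the right inequality.

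For the left inequality I would set $c=\langle\bbu_i,\bbw_i\rangle$ and choose the sign $a_i=\sign(c)\in\{-1,1\}$ so that $a_i c=|c|$. Direct expansion then gives $\|a_i\bbu_i-\bbw_i\|^2=2-2|c|$, while the projection distance satisfies $\|\bbu_i-Pr_{\bbw_i}\bbu_i\|^2=1-|c|^2=(1-|c|)(1+|c|)$. The claimed factor of two reduces to the elementary inequality $2(1-|c|)\leq 4(1-|c|)(1+|c|)$, which holds for every $|c|\in[0,1]$ since it is equivalent to $1\leq 2(1+|c|)$ (with equality when $|c|=1$, where both sides vanish). Taking square roots yields $\|a_i\bbu_i-\bbw_i\|\leq 2\|\bbu_i-Pr_{\bbw_i}\bbu_i\|$.

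No single step is a genuine obstacle: the statement is a self-contained perturbation argument, and chaining the two displayed inequalities immediately gives the full claim. The only point requiring care is the sign alignment in the left inequality, since the projection distance is blind to the global sign of $\bbw_i$, so $a_i$ must be chosen to match the sign of the overlap $\langle\bbu_i,\bbw_i\rangle$ before the norms are compared. I would also note explicitly that the argument relies on $\{\bbw_j\}$ being a complete orthonormal system, which is exactly what the self-adjointness and discrete-spectrum hypotheses guarantee in the intended application to $\ccalL$ and $\bbL_n^\epsilon$.
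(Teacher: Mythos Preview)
Your proposal is correct and follows essentially the same Davis--Kahan-type argument as the paper: expand $\bbu_i-Pr_{\bbw_i}\bbu_i$ in the orthonormal eigenbasis $\{\bbw_j\}$ of $\bbB$, use self-adjointness to obtain $\langle(\bbB-\bbA)\bbu_i,\bbw_j\rangle=(\lambda_j(\bbB)-\lambda_i(\bbA))\langle\bbu_i,\bbw_j\rangle$, and bound below by the minimal gap. The only difference is that the paper cites the first inequality $\|a_i\bbu_i-\bbw_i\|\leq 2\|\bbu_i-Pr_{\bbw_i}\bbu_i\|$ from an external reference, whereas you supply the elementary sign-alignment computation $2(1-|c|)\leq 4(1-|c|^2)$ directly; your version is therefore more self-contained but not a different route.
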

\begin{proof}
The first inequality is directly from \cite[Proposition~18]{von2008consistency}. Let $Pr^\perp_{\bbw_i}$ be the orthogonal projection onto the complement of the subspace generated by $\bbw_i$. Then we have
\begin{align}
    \|\bbu_i - Pr_{\bbw_i} \bbu_i\|=\|Pr_{\bbw_i}^\perp \bbu_i\|=\Big\|\sum_{j\neq i}\langle \bbu_i, \bbw_j \rangle \bbw_j\Big\|.
\end{align}
Therefore, we have
\begin{align}
  \nonumber  &\|Pr_{\bbw_i}^\perp \bbB\bbu_i - Pr_{\bbw_i}^\perp \bbA\bbu_i\| \\
    & = \Big\| \sum_{j\neq i}\langle \bbB\bbu_i,\bbw_j \rangle \bbw_j -\sum_{j\neq i}\langle \bbA\bbu_i,\bbw_j \rangle\bbw_j \Big\|\\
    &=\Big\| \langle \bbu_i, \bbB\bbw_j \rangle\bbw_j -\sum_{j\neq i} \lambda_i(\bbA)\langle\bbu_i, \bbw_j \rangle \bbw_j\Big\|\\
    &=\Big\| \sum_{j\neq i}(\lambda_i(\bbB)-\lambda_i(\bbA)) \langle \bbu_i,\bbw_j \rangle\bbw_j\Big\|\\
    &\geq \min_{j\neq i} | \lambda_i(\bbB)-\lambda_i(\bbA) | \|\sum_{j\neq i}\langle \bbu_i,\bbw_j \rangle\bbw_j \| \\
    &= \min_{j\neq i} | \lambda_i(\bbB)-\lambda_i(\bbA) | \| \bbu_i -Pr_{\bbw_i}\bbu_i \|,
\end{align}
together with $\|\bbB\bbu_i - \bbA\bbu_i \| \geq  \|Pr_{\bbw_i}^\perp \bbB\bbu_i - Pr_{\bbw_i}^\perp \bbA\bbu_i\|$. We can conclude the proof.
\end{proof}
The following lemma is adapted from \cite[Lemma~5c]{dunson2021spectral}
\begin{lemma}\label{lem:conv_eigenvalue}
Let $\bbA, \bbB$ be self-adjoint operators with $\{\lambda_i(\bbA), \bbu_i\}_{i=1}^\infty$ and $\{\lambda_i(\bbB), \bbw_i\}_{i=1}^\infty$ as the corresponding spectrum. Then we have
\begin{equation}
    |\lambda_i(\bbA)-\lambda_i(\bbB)|=\frac{\langle(\bbA-\bbB)\bbu_i,\bbw_i\rangle}{|\langle \bbu_i,\bbw_i \rangle|}\leq \frac{\|(\bbA-\bbB)\bbu_i\|}{|\langle \bbu_i,\bbw_i \rangle|}
\end{equation}
\end{lemma}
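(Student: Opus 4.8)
The plan is to read off the desired identity directly from the two eigenvalue equations $\bbA\bbu_i = \lambda_i(\bbA)\bbu_i$ and $\bbB\bbw_i = \lambda_i(\bbB)\bbw_i$, and then obtain the inequality by a single application of Cauchy--Schwarz. The key observation is that pairing $(\bbA-\bbB)\bbu_i$ against the \emph{other} operator's eigenfunction $\bbw_i$ lets each operator act on an eigenfunction of its own, producing scalars that can be factored out. So first I would expand
\begin{equation}
\langle (\bbA-\bbB)\bbu_i, \bbw_i\rangle = \langle \bbA\bbu_i, \bbw_i\rangle - \langle \bbB\bbu_i, \bbw_i\rangle.
\end{equation}

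For the first term I would use $\bbA\bbu_i = \lambda_i(\bbA)\bbu_i$ to get $\langle \bbA\bbu_i,\bbw_i\rangle = \lambda_i(\bbA)\langle \bbu_i,\bbw_i\rangle$. For the second term the crucial step is to move $\bbB$ onto $\bbw_i$ using self-adjointness, $\langle \bbB\bbu_i,\bbw_i\rangle = \langle \bbu_i,\bbB\bbw_i\rangle = \lambda_i(\bbB)\langle \bbu_i,\bbw_i\rangle$, where I also use that $\bbB$ self-adjoint forces $\lambda_i(\bbB)$ to be real so no conjugation appears. Subtracting yields the exact relation
\begin{equation}
\langle (\bbA-\bbB)\bbu_i, \bbw_i\rangle = \big(\lambda_i(\bbA)-\lambda_i(\bbB)\big)\langle \bbu_i,\bbw_i\rangle,
\end{equation}
and dividing by $\langle \bbu_i,\bbw_i\rangle$ (assumed nonzero, which is implicit in the statement since it appears in the denominator) gives the claimed identity for $\lambda_i(\bbA)-\lambda_i(\bbB)$ after taking absolute values.

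For the inequality I would apply the Cauchy--Schwarz bound $|\langle (\bbA-\bbB)\bbu_i,\bbw_i\rangle| \le \|(\bbA-\bbB)\bbu_i\|\,\|\bbw_i\|$ to the numerator, and then invoke the normalization $\|\bbw_i\|=1$ (the eigenfunctions form an orthonormal system, as already used for $\ccalL$ and for $\bbL_n^\epsilon$ in the paper) to drop the $\|\bbw_i\|$ factor. Combining with the identity from the previous paragraph produces
\begin{equation}
|\lambda_i(\bbA)-\lambda_i(\bbB)| = \frac{|\langle (\bbA-\bbB)\bbu_i,\bbw_i\rangle|}{|\langle \bbu_i,\bbw_i\rangle|} \le \frac{\|(\bbA-\bbB)\bbu_i\|}{|\langle \bbu_i,\bbw_i\rangle|},
\end{equation}
which is exactly the assertion. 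I do not anticipate a genuine obstacle here, since the argument is a short computation; the only points requiring care are (i) invoking self-adjointness of $\bbB$ together with reality of its spectrum to justify the adjoint transfer without a complex conjugate, and (ii) the harmless standing assumption $\langle\bbu_i,\bbw_i\rangle\neq 0$ that makes the quoted expression well defined (for the intended application $\bbw_i$ is close to $\bbu_i$, so this overlap is bounded away from zero). This bound is then fed, via Lemma~\ref{lem:conv_eigenfunction} and the operator estimate of Theorem~\ref{thm:operator-diff}, into the proof of Theorem~\ref{thm:converge-spectrum}.
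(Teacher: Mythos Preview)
Your argument is correct: the identity follows immediately from $\langle(\bbA-\bbB)\bbu_i,\bbw_i\rangle=\lambda_i(\bbA)\langle\bbu_i,\bbw_i\rangle-\langle\bbu_i,\bbB\bbw_i\rangle=(\lambda_i(\bbA)-\lambda_i(\bbB))\langle\bbu_i,\bbw_i\rangle$ via self-adjointness of $\bbB$ and reality of its spectrum, and the inequality is Cauchy--Schwarz with $\|\bbw_i\|=1$. Note that the paper does not actually supply its own proof of this lemma---it simply attributes the statement to \cite[Lemma~5c]{dunson2021spectral}---so there is no in-paper argument to compare against; your self-contained derivation is exactly the standard one and would serve as a proof here.
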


With the above lemmas and our proposed Theorem \ref{thm:operator-diff}, which includes the operator difference, we can prove Theorem \ref{thm:converge-spectrum}. We first fix some $K\in\mathbb{N}$, wich provides an upper bound for $\lambda_i\leq \lambda_K$ for all $1\leq i\leq K$. By taking the probability $1-n^{-2}$ and $\epsilon = n^{-1/(d+4)}$, we can conclude that the operator difference in Theorem \ref{thm:operator-diff} can be bounded with order $O(\sqrt{\epsilon})$, with the constant scaling with $\lambda_K^{\frac{d+2}{4}}$. Combine with Lemma \ref{lem:conv_eigenfunction} and $theta =\min_{1\leq j\neq i \leq K} |\lambda_j-\lambda_{i,n}^\epsilon|$, we can get 
\begin{equation}
    \|a_i\bm\phi_{i,n}^\epsilon - \bm\phi_i\|\leq \frac{C_k}{\theta}\sqrt{\epsilon},
\end{equation}
where we denote the constant as $\Omega_1$ to include the effects of $K$, the eigengap and the volume of $\ccalM$.

This upper bound of the eigenfunction difference leads to $|\langle\bbu_i,\bbw_i \rangle |\geq 1-\Omega_1/2\sqrt{\epsilon}\geq 1$. Combining with Lemma \ref{lem:conv_eigenvalue}, the difference of the eigenvalues can also be bounded in the order of $O(\sqrt{\epsilon})$.

\subsection{Proof of Theorem \ref{thm:converge-MNN}}
\label{app:nn}
We first write out the filter representation as
 \begin{align}
    &\nonumber \|\bbh(\bbL_n^\epsilon)\bbP_n f - \bbP_n\bbh(\ccalL) f\|\\
    &\leq \left\| \sum_{i=1}^\infty \hat{h}(\lambda_{i,n}^\epsilon) \langle \bbP_nf,\bm\phi_{i,n}^\epsilon \rangle_{\bbG_n}\bm\phi_i^n - \sum_{i=1}^\infty \hat{h}(\lambda_i)\langle f,\bm\phi_i\rangle_{\ccalM} \bbP_n \bm\phi_i  \right\|
 \end{align}
 
 We decompose the $\alpha$-FDT filter function as $\hat{h}(\lambda)=h^{(0)}(\lambda)+\sum_{l\in\ccalK_m}h^{(l)}(\lambda)$ as
 \begin{align}
\label{eqn:h0}& h^{(0)}(\lambda) = \left\{ 
\begin{array}{cc} 
              \hat{h}(\lambda)-\sum\limits_{l\in\ccalK_m}\hat{h}(C_l)  &  \lambda\in[\Lambda_k(\alpha)]_{k\in\ccalK_s} \\
                0& \text{otherwise}  \\
                \end{array} \right. \\
\label{eqn:hl}& h^{(l)}(\lambda) = \left\{ 
\begin{array}{cc} 
                \hat{h}(C_l) &  \lambda\in[\Lambda_k(\alpha)]_{k\in\ccalK_s} \\
                \hat{h}(\lambda) & 
                \lambda\in\Lambda_l(\alpha)\\
                0 &
                \text{otherwise}  \\
                \end{array} \right.             
\end{align}
 
 With the triangle inequality, we start by analyzing the output difference of $h^{(0)}(\lambda)$ as
 \begin{align}
    & \nonumber \left\| \sum_{i=1}^{N_s} {h}^{(0)}(\lambda_i^n) \langle \bbP_nf,\bm\phi_i^n \rangle_{\bbG_n}\bm\phi_i^n - \sum_{i=1}^{N_s} {h}^{(0)}(\lambda_i)\langle f,\bm\phi_i\rangle_{\ccalM} \bbP_n \bm\phi_i  \right\|
     \\ 
     &\nonumber \leq  \left\| \sum_{i=1}^{N_s} \left({h}^{(0)}(\lambda_i^n)- {h}^{(0)}(\lambda_i) \right) \langle \bbP_nf,\bm\phi_i^n \rangle_{\bbG_n}\bm\phi_i^n \right\| \\
     &  +\left\| \sum_{i=1}^{N_s} {h}^{(0)}(\lambda_i)\left( \langle \bbP_n f,\bm\phi_i^n \rangle_{\bbG_n} \bm\phi_i^n - \langle f,\bm\phi_i \rangle_{\ccalM} \bbP_n \bm\phi_i \right)  \right\|.\label{eqn:conv-1}
 \end{align}
 
 The first term in \eqref{eqn:conv-1} can be bounded by leveraging the $A_h$-Lipschitz continuity of the frequency response. From the eigenvalue difference in Theorem \ref{thm:converge-spectrum}, we can claim that for each eigenvalue $\lambda_i \leq \lambda_K$, we have
\begin{gather}
 \label{eqn:eigenvalue}  |\lambda_{i,n}^\epsilon-\lambda_i|\leq \Omega_1\sqrt{\epsilon}.
 \end{gather}
The first term is bounded as 
\begin{align}
   &\nonumber \left\| \sum_{i=1}^{N_s} ({h}^{(0)}(\lambda_i^n) - {h}^{(0)}(\lambda_i)) \langle \bbP_n f,\bm\phi_i^n \rangle_{\bbG_n} \bm\phi_i^n  \right\|\\
   & \leq \sum_{i=1}^{N_s} |{h}^{(0)}(\lambda_i^n)-{h}^{(0)}(\lambda_i)| |\langle \bbP_n f,\bm\phi_i^n \rangle_{\bbG_n}| \|\bm\phi_i^n\|\\
   &\leq \sum_{i=1}^{N_s} A_h |\lambda_i^n-\lambda_i| \|\bbP_n f\| \|\bm\phi_i^n \|^2\leq N_s A_h\Omega_1 \sqrt{\epsilon}.
\end{align}

The second term in \eqref{eqn:conv-1} can be bounded combined with the convergence of eigenfunctions in \eqref{eqn:eigenfunction} as
\begin{align}
  & \nonumber \Bigg\| \sum_{i=1}^{N_s} {h}^{(0)}(\lambda_i)\left( \langle \bbP_nf,\bm\phi_i^n \rangle_{\bbG_n}\bm\phi_i^n - \langle f,\bm\phi_i \rangle_{\ccalM} \bbP_n \bm\phi_i\right)  \Bigg\|\\
   & \leq \nonumber \Bigg\|  \sum_{i=1}^{N_s} {h}^{(0)}(\lambda_i)  \left(\langle \bbP_n f,\bm\phi_i^n\rangle_{\bbG_n}\bm\phi_i^n  - \langle \bbP_nf,\bm\phi_i^n \rangle_{\bbG_n} \bbP_n\bm\phi_i\right)\Bigg\|\\
   &\label{eqn:term1}+ \left\| \sum_{i=1}^{N_s}  {h}^{(0)}(\lambda_i) \left(\langle \bbP_n f,\bm\phi_i^n\rangle_{\bbG_n} \bbP_n\bm\phi_i -\langle f,\bm\phi_i\rangle_\ccalM \bbP_n\bm\phi_i \right) \right\|
\end{align}
From the convergence stated in Theorem \ref{thm:converge-spectrum}, we have
\begin{gather}
 \label{eqn:eigenfunction}    \|a_i \bm\phi_{i,n}^\epsilon-\bm\phi_i\|\leq \Omega_2 \sqrt{\epsilon},
 \end{gather}
 Therefore, the first term in \eqref{eqn:term1} can be bounded as
\begin{align}
& \nonumber \left\|  \sum_{i=1}^{N_s} {h}^{(0)}(\lambda_i) \left(\langle \bbP_n f,\bm\phi_i^n\rangle_{\bbG_n}\bm\phi_i^n  - \langle \bbP_nf,\bm\phi_i^n \rangle_{\ccalM} \bbP_n\bm\phi_i\right)\right\|\\
& \qquad \qquad\leq \sum_{i=1}^{N_s} \|\bbP_n f\|\|\bm\phi_i^n - \bbP_n\bm\phi_i\|\leq N_s \Omega_2\sqrt{\epsilon}.
\end{align}
The last equation comes from the definition of norm in $L^2(\bbG_n)$.
The second term in \eqref{eqn:term1} can be written as
\begin{align}
     & \nonumber \Bigg\| \sum_{i=1}^{N_s}  {h}^{(0)}(\lambda_i^n) (\langle \bbP_n f,\bm\phi_i^n\rangle_{\bbG_n}  \bbP_n\bm\phi_i -\langle f,\bm\phi_i\rangle_\ccalM \bbP_n\bm\phi_i ) \Bigg\| \\
   &\leq \sum_{i=1}^{N_s} |{h}^{(0)}(\lambda_i^n)| \left|\langle \bbP_n f,\bm\phi_i^n\rangle_{\bbG_n}  -\langle f,\bm\phi_i\rangle_\ccalM\right|\|\bbP_n\bm\phi_i\|.
\end{align}
Because $\{x_1, x_2,\cdots,x_n\}$ is a set of uniform sampled points from $\ccalM$, based on Theorem 19 in \cite{von2008consistency} we can claim that
\begin{equation}
   \left|\langle \bbP_n f,\bm\phi_i^n\rangle_{\bbG_n}  -\langle f,\bm\phi_i\rangle_\ccalM\right| = O\left(\sqrt{\frac{\log n}{n}}\right).
\end{equation}
Taking into consider the boundedness of frequency response $|{h}^{(0)}(\lambda)|\leq 1$ and the bounded energy $\|\bbP_n\bm\phi_i\|$. Therefore, we have 
\begin{align}
&\nonumber  \left\| \sum_{i=1}^{N_s} \hat{h}(\lambda_i^n) \left(\langle \bbP_n f,\bm\phi_i^n\rangle_{\bbG_n}  -\langle f,\bm\phi_i\rangle_\ccalM \right)\bbP_n\bm\phi_i  \right\|= O\left(\sqrt{\frac{\log n}{n}}\right).
\end{align}

Combining the above results, we can bound the output difference of $h^{(0)}$. Then we need to analyze the output difference of $h^{(l)}(\lambda)$ and bound this as
\begin{align}
    \nonumber &\left\| \bbP_n \bbh^{(l)}(\ccalL)f -\bbh^{(l)}(\bbL_n)\bbP_n f \right\| 
    \\& \leq \left\| (\hat{h}(C_l)+\delta)\bbP_n f - (\hat{h}(C_l)-\delta)\bbP_nf\right\| \leq 2\delta\|\bbP_nf\|,
\end{align}
where $\bbh^{(l)}(\ccalL)$ and $\bbh^{(l)}(\bbL_n)$ are filters with filter function $h^{(l)}(\lambda)$ on the LB operator $\ccalL$ and graph Laplacian $\bbL_n$ respectively.
Combining the filter functions, we can write
\begin{align}
   \nonumber &\|\bbP_n\bbh(\ccalL)f-\bbh(\bbL_n)\bbP_n f\|\\\nonumber &=
    \Bigg\|\bbP_n\bbh^{(0)}(\ccalL)f +\bbP_n\sum_{l\in\ccalK_m}\bbh^{(l)}(\ccalL)f -\\& \qquad \qquad \qquad \bbh^{(0)}(\bbL_n)\bbP_n f - \sum_{l\in\ccalK_m} \bbh^{(l)}(\bbL_n)\bbP f \Bigg\|\\
    &\nonumber \leq \|\bbP_n \bbh^{(0)}(\ccalL)f-\bbh^{(0)}(\bbL_n)\bbP_n f\|+\\
    &\qquad \qquad \qquad \sum_{l\in\ccalK_m}\|\bbP_n \bbh^{(l)}(\ccalL)f-\bbh^{(l)}(\bbL_n)\bbP_nf\|.
\end{align}

Above all, we can claim that 
\begin{equation}
    \|\bbh(\bbL_n)\bbP_n f - \bbP_n\bbh(\ccalL) f\|\leq \left( N\Omega_2 +A_h \Omega_1\right)\sqrt{\epsilon} + C\sqrt{\frac{\log n}{n}}
\end{equation}

}

\end{document}